\RequirePackage{fix-cm}

\documentclass[smallextended]{svjour3}      

\smartqed  

\usepackage{graphicx}

\usepackage{epsfig}
\usepackage{amsmath,amssymb,amsfonts}
\usepackage[ruled,vlined,linesnumbered,commentsnumbered]{algorithm2e}

\journalname{Algorithmica}

\begin{document}

\title{On the Tree Augmentation Problem\thanks{A preliminary version appeared in  ESA 2017: 61:1-61:14}}

\author{Zeev Nutov}    

\institute{Z. Nutov \at
              Department of Mathematics and Computer Science \\
							The Open University of Israel, Raanana, Israel. \\
              \email{nutov@openu.ac.il}}

\date{Received: date / Accepted: date}

\maketitle

\newcommand {\ignore} [1] {}

\newcommand{\pa}      {{\sf pa}}
\newcommand{\ch}      {{\sf ch}}
\newcommand{\diam}  {{\sf diam}}
\newcommand{\T}        {{\sf T}}

\newcommand{\TA}    {{\sc Tree Augmentation}}
\newcommand{\CLP}  {{\sc Cut-LP}}
\newcommand{\kBLP} {{\sc $k$-Branch-LP}}
\newcommand{\BLP} {{\sc Bunch-LP}}

\newcommand{\BB}    {{\cal B}}
\newcommand{\LL}    {{\cal L}}
\newcommand{\CC}    {{\cal C}}
\newcommand{\de}    {\delta}
\newcommand{\eps}   {\epsilon}
\newcommand{\la}    {\lambda}
\newcommand{\ga}    {\gamma}
\newcommand{\si}    {\sigma}

\newcommand{\subs}  {\subseteq}
\newcommand{\sem}   {\setminus}
\newcommand{\empt} {\emptyset}

\newcommand{\di}    {\displaystyle}
\newcommand{\ti}    {\tilde}
\newcommand{\h}    {\hat}

\begin{abstract} 
In the {\TA} problem we are given a tree $T=(V,F)$ and a set $E \subs V \times V$ of edges
with positive integer costs $\{c_e:e \in E\}$.
The goal is to augment $T$ by a minimum cost edge set $J \subs E$ such that $T \cup J$ is $2$-edge-connected. 
We obtain the following results.
\begin{itemize}
\item
Recently, Adjiashvili [SODA 17] introduced a novel LP for the problem and used it to break
the $2$-approximation barrier for instances when the maximum cost $M$ of an edge in $E$ is bounded by a constant;
his algorithm computes a $1.96418+\eps$ approximate solution in time $n^{{(M/\eps^2)}^{O(1)}}$.
Using a simpler LP, we achieve ratio $\frac{12}{7}+\eps$ in time $2^{O(M/\eps^2)} poly(n)$.
This gives ratio better than $2$ for logarithmic costs, and not only for constant costs.
\item
One of the oldest open questions for the problem is whether for unit costs (when $M=1$) 
the standard LP-relaxation, so called {\CLP}, has integrality gap less than $2$.
We resolve this open question by proving that for unit costs
the integrality gap of the {\CLP} is at most $28/15=2-2/15$.
In addition, we will prove that another natural LP-relaxation, that is much simpler 
than the ones in previous work, has integrality gap at most $7/4$.
\end{itemize}
\keywords{Tree augmentation \and Logarithmic costs \and Approximation algorithm \and Half-integral extreme points \and Integrality gap}
\end{abstract}


\section{Introduction} \label{s:intro}

We consider the following problem:

\begin{center} \fbox{\begin{minipage}{0.965\textwidth} \noindent
{\TA} \\
{\em Input:}  \  \ A tree $T=(V,F)$ and an additional set $E \subs V \times V$ of edges
with positive integer costs $c=\{c_e:e \in E\}$. \\
{\em Output:}   A minimum cost edge set $J \subs E$ such that $T \cup J$ is $2$-edge-connected.
\end{minipage}}\end{center}

The problem was studied extensively, c.f. \cite{FJ,KT,CJR,N,EFKN-APPROX,EFKN-TALG,CKKK,MN,CN,KN-TAP,CG,KN16}.
For a long time the best known ratio for the problem was $2$ for arbitrary costs \cite{FJ} 
and $1.5$ for unit costs \cite{EFKN-APPROX,KN-TAP}; 
see also \cite{EFKN-TALG} for a simple $1.8$-approximation algorithm.
It is also known that the integrality gap of a standard LP-relaxation for the problem, 
so called {\CLP}, is at most $2$ \cite{FJ} and at least $1.5$ \cite{CKKK}.
Several other LP and SDP relaxations were introduced to show that the algorithm in 
\cite{EFKN-APPROX,EFKN-TALG,KN-TAP} achieves ratio better than $2$ w.r.t. to these relaxations, c.f. \cite{CG,KN16}.
For additional algorithms with ratio better than $2$ for restricted versions see \cite{CN,MN}.

Let $M$ denote the maximum cost of an edge in $E$. 
Recently Adjiashvili \cite{A} introduced a novel LP for the problem -- so called the {\sc $k$-Bundle-LP},
and used it to break the natural $2$-approximation barrier for instances when $M$ is bounded 
by a constant. To introduce this result we need some definitions.

The edges of $T$ will be called {\bf $T$-edges} to distinguish them from the edges in $E$.
{\TA} can be formulated as a problem of covering the $T$-edges by paths.
Let $T_{uv}$ denote the unique $uv$-path in $T$.
We say that an {\bf edge $uv$ covers a $T$-edge $f$} if $f \in T_{uv}$. 
Then $T \cup J$ is $2$-edge-connected if and only if $J$ covers $T$.
For a set $B \subs F$ of $T$-edges let $\psi(B)$ denote the set of edges in $E$ that cover some $f \in B$,
and $\tau(B)$ the minimum cost of an edge set in $E$ that covers $B$.
For $J \subs E$ let $x(J)=\sum_{e \in J} x_e$.
The standard LP for the problem which we call the {\CLP} seeks to minimize 
$c^\T x=\sum_{e \in E} c_ex_e$ over the {\sc Cut-Polyhedron} 
$$
\Pi^{Cut}=\left\{x \in \mathbb{R}^E:x(\psi(f))\geq 1  \ \forall f \in F,x \geq 0\right\}    
$$
The {\sc $k$-Bundle-LP} of \cite{A} adds over the standard {\sc Cut-LP} the constraints  
$\sum_{e \in \psi(B)} c_ex_e \geq \tau(B)$ for any forest $B$ in $T$ that has at most $k$ leaves, 
where $k=\Theta(M/\eps^2)$.
The algorithm of \cite{A} computes a $1.96418+\eps$ approximate solution 
w.r.t. the {\sc $k$-Bundle-LP} in time $n^{k^{O(1)}}$.
For unit costs, a modification of the algorithm achieves ratio $5/3+\eps$.

Here we observe that it is sufficient to consider just certain subtrees of $T$ instead of forests.
Root $T$ at some node $r$. The choice of $r$ defines an ancestor/descendant relation on $V$. 
The {\bf leaves of $T$} are the nodes in $V \sem \{r\}$ that have no descendants.
For any subtree $S$ of $T$, the node $s$ of $S$ closest to $r$ is the root of $S$,
and the pair $S,s$ is called a {\bf rooted subtree} of $T,r$;
we will not mention the roots of trees if they are clear from the context. 
We say that $S$ is
a {\bf complete rooted subtree} if it contains all descendants of $s$ in $T$, and 
a {\bf full rooted subtree} if for any non-leaf node $v$ of $S$ the children of $v$ in $S$ and $T$ coincide; 
see Fig.~\ref{f:branch}(a,b).
A {\bf branch of $S$}, or a {\bf branch hanging on $s$}, is a rooted subtree $B$ of $S$ 
induced by the root $s$ of $S$ and the descendants in $S$ of some child $s'$ of $s$; 
see Fig.~\ref{f:branch}~(c).
We say that a subtree $B$ of $T$ is a {\bf branch} if it is a branch of a full rooted subtree, 
or if it is a full rooted subtree with root $r$.
Equivalently, a branch is a union of a full rooted subtree and its parent $T$-edge.

\begin{figure} \centering
\includegraphics{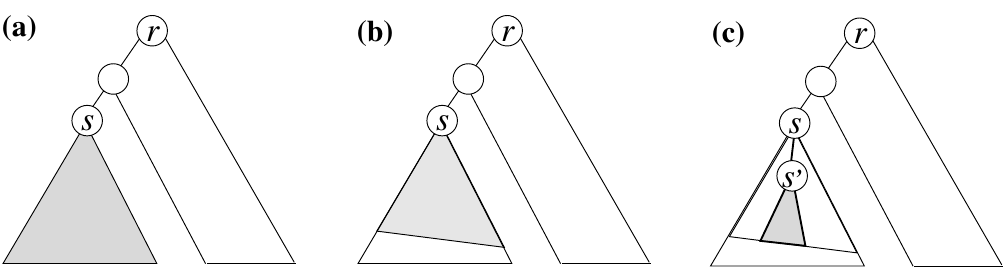}
\caption{(a) complete rooted subtree; (b) full rooted subtree; (c) branch of a full rooted subtree.}
\label{f:branch}
\end{figure}

Let $\BB_k$ denote the set of branches in $T$ with less than $k$ leaves. 
The {\kBLP} seeks to minimize $c^\T x=\sum_{e \in E} c_ex_e$ over the {\sc $k$-Branch-Polyhedron} 
$\Pi^{Br}_k \subs \mathbb{R}^E$ defined by the constraints:
\begin{eqnarray*} 
\sum_{e \in \psi(f)} x_e        & \geq & 1  \ \ \ \ \ \ \ \ \ \ \forall f \in F             \\ 
\sum_{e \in \psi(B)} c_ex_e  & \geq & \tau(B) \ \ \ \ \     \forall B \in \BB_k   \\
x_e                                          & \geq & 0 \ \ \ \ \ \ \ \ \ \ \forall e \in E       
\end{eqnarray*}

The set of constrains of the {\sc $k$-Branch-LP} is a subset of constraints of the
 {\sc $k$-Bundle-LP} of \cite{A}, hence the {\sc $k$-Branch-LP} is 
both more compact and its optimal value is no larger than that of the {\sc $k$-Bundle-LP}.
The first main result in this paper is: 

\begin{theorem} \label{t:main}
For any $1 \leq \lambda \leq k-1$, {\TA} admits a $4^k \cdot poly(n)$ time algorithm that 
computes a solution of cost at most 
$\rho+\frac{8}{3}\frac{\la M}{k-\la M}+\frac{2}{\la}$ times the optimal value of the {\kBLP}, 
where $\rho=\frac{12}{7}$ for arbitrary costs and $\rho=1.6$ for unit costs.
\end{theorem}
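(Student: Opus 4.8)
\emph{Proof plan.}
The plan is to cut $T$ into ``small'' pieces — branches with fewer than $k$ leaves — on which near-optimal covering is affordable, and to pay controlled overheads for covering the interface between the pieces and for the mild over-counting that the pieces create. Throughout fix an optimal {\kBLP} solution $x$ and set $L=c^\T x$. Since every leaf $T$-edge is pendant and any single link lies on at most two pendant $T$-edges, feasibility of $x$ forces $\sum_{f\ \text{leaf}}1\le\sum_{f\ \text{leaf}}x(\psi(f))\le 2L$, so $T$ has at most $2L$ leaves. As is standard for {\TA}, I would first preprocess $E$ to be \emph{shadow-complete} (if $uv\in E$ add $uw$ with $c_{uw}\le c_{uv}$ for every $w\in V(T_{uv})$); this changes neither the optimum nor the {\kBLP} value and lets a link covering a $T$-edge of a subtree be replaced by a link with an endpoint in that subtree.

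Next I would build the decomposition. Processing $T$ bottom-up, contracting each finished piece to a node, I cut off a branch as soon as the already-finished child-subtrees at the current node total $\la M$ leaves; since the ``last'' child-subtree added still has fewer than $\la M$ leaves, each branch produced has between $\la M$ and $k-1$ leaves, hence lies in $\BB_k$ (with some care at high-degree nodes so as never to overshoot). Let $\BB=\{B_1,\dots,B_p\}$ be the branches and $I=\{f_1,\dots,f_p\}$ the set of their top $T$-edges. Then: (i) the sets $F\cap B_i$ together with $I$ partition $F$, and $F\cap B_i$ lies in no other branch; and (ii) since each $B_i$ has at least $\la M$ leaves and $T$ has at most $2L$ leaves, $p=|I|\le 2L/(\la M)$, so covering every edge of $I$ with one cheapest link (cost $\le M$ each) costs at most $\tfrac{2}{\la}L$.

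Third I would use the following lemma, proved separately: on a tree with fewer than $k$ leaves the {\kBLP} has integrality gap at most $\rho$ — $\rho=\tfrac{12}{7}$ for arbitrary costs, $\rho=1.6$ for unit costs — and a cover of cost at most $\rho$ times the {\kBLP} value is computable in $4^{k}\cdot poly(n)$ time, via a dynamic program over the $O(k)$-vertex topological tree underlying the branch together with a covering-by-paths charging argument. Since every sub-branch of $B_i$ has fewer than $k$ leaves, the restriction of $x$ to $\psi(B_i)$ is feasible for the {\kBLP} of $B_i$; applying the lemma to each $B_i$ gives $J_i\subs E$ covering $B_i$ with $c(J_i)\le\rho\sum_{e\in\psi(B_i)}c_ex_e$. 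The algorithm outputs $J=\big(\bigcup_iJ_i\big)\cup(\text{a cheapest cover of }I)$, which covers $F$ by (i). Writing $\psi(B_i)=\psi^{\mathrm{in}}(B_i)\sqcup\psi^{\mathrm{out}}(B_i)$ according to whether $T_e\subs B_i$, property (i) gives $\sum_i\sum_{e\in\psi^{\mathrm{in}}(B_i)}c_ex_e\le L$, whence
\[
 c(J)\ \le\ \rho L\ +\ \rho\sum_{i=1}^{p}\sum_{e\in\psi^{\mathrm{out}}(B_i)}c_ex_e\ +\ \tfrac{2}{\la}L .
\]

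The main obstacle is the middle term. A link in $\psi^{\mathrm{out}}(B_i)$ must cover the top edge $f_i$ or join two descendant subtrees of $B_i$, and I expect the crux of the proof to be an estimate of the form $\sum_i\sum_{e\in\psi^{\mathrm{out}}(B_i)}c_ex_e\le\tfrac{4}{3}\cdot\tfrac{\la M}{k-\la M}\,L$: the decomposition threshold $\la M$ together with the slack $k-\la M$ between the lower and upper leaf-bounds of a branch should prevent any positive-cost link from being ``out'' for too many branches relative to its own LP weight. Combined with $\rho<2$ this bounds the middle term by $\tfrac{8}{3}\cdot\tfrac{\la M}{k-\la M}\,L$ and yields the stated ratio; the running time is $4^{k}poly(n)$ because there are polynomially many branches, each processed in $4^{k}poly(n)$. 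A separate, self-contained task is the small-tree integrality-gap lemma, which is the real source of the constants $\tfrac{12}{7}$ and $1.6$.
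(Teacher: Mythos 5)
Your plan and the paper's proof start from the same ingredients (branches with few leaves, a cover-the-interface term, a small-tree rounding lemma) but then diverge in a way that opens a real gap. The paper does \emph{not} decompose the tree into branches upfront; it runs an outer loop that repeatedly finds an inclusionwise minimal $(k,\la)$-subtree $\hat S$ whose parent $T$-edge $f$ is $\la$-\emph{thin}, covers $\hat S$ (after splitting off a sub-piece of $\la$-thick edges), contracts it, and charges the cost locally against the decrease $\Delta(c^\T x)=\sum_{e\in\ga(\hat S)}c_ex_e\ge\frac{k-\la M}{2}$ in the LP value. Two facts drive the $\frac{8}{3}\frac{\la M}{k-\la M}$ term: the boundary edges $\psi(f)$ have $\sum_{e\in\psi(f)}c_ex_e\le M\la$ because $f$ is $\la$-thin, and the number of contraction steps is $O\bigl(L/(k-\la M)\bigr)$. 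Your decomposition supplies neither control: the interface edges $I$ carry no thinness guarantee, and your branch count is tuned to $O(L/(\la M))$, which is the wrong reciprocal for producing that term.

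The crux you flag as ``expected'' --- $\sum_i\sum_{e\in\psi^{\mathrm{out}}(B_i)}c_ex_e\le\frac{4}{3}\cdot\frac{\la M}{k-\la M}L$ --- is, I believe, false for your decomposition, not merely unproved. The $T$-edge sets of your branches are disjoint, but a single path $T_{uv}$ traverses the $T$-edge sets of \emph{every} branch between $u$'s home branch and $v$'s home branch (up to their LCA branch and back down). For each such branch $B_m$ not containing $T_{uv}$ entirely, the link $uv$ lands in $\psi^{\mathrm{out}}(B_m)$, so one link can be charged $\Theta(\text{depth of the decomposition})$ times; nothing ties this to $\frac{\la M}{k-\la M}$. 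The paper never faces this because $\ga(\hat S)$ is charged once (internal edges are disjoint across iterations) and the boundary is charged against the absolute bound $\la M$, not against $L$. A related but smaller concern: your bound $p\le 2L/(\la M)$ counts ``leaves'' of branches that may themselves be contracted nodes rather than leaves of $T$, so the count needs care.

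Finally, your small-tree lemma is not quite the one the paper needs. Lemma~\ref{l:main} does not give a uniform $\rho$ on $\psi(B)$; it gives $\rho\sum_{e\in E\setminus R}c_ex_e+\frac{4}{3}\sum_{e\in R}c_ex_e$ with $R$ the root-incident edges, obtained by balancing an exact branch-by-branch cover ($C^{\sf in}+2C^{\sf cr}+C^r$) against a half-integral-rounding bound ($\frac{4}{3}(2C^{\sf in}+C^{\sf cr}+C^r)$, or $2C^{\sf in}+\frac{4}{3}C^{\sf cr}+C^r$ for unit costs). The discounted $\frac{4}{3}$ on root-incident edges is exactly what is applied to the $\psi(f)$ boundary, and it is what produces $\frac{8}{3}$ rather than $2\rho$ in the final ratio. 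Using a flat $\rho$ both loses a constant and does nothing to fix the overcounting above. A presentational remark: you cannot simply ``fix an optimal {\kBLP} solution''; the LP has exponentially many constraints and computing $\tau(B)$ is itself the hard subroutine. The paper's outer loop either succeeds or explicitly exhibits a violated $k$-branch constraint, which yields a separation oracle that is then wrapped in binary search and the ellipsoid method.
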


For a given $\eps$, choosing properly $\la=\Theta(1/\eps)$ and $k=\Theta(M/\eps^2)$ 
gives ratio $\rho+\eps$ in time $2^{O(M/\eps^2)} \cdot poly(n)$.

In parallel to our work Fiorini, Gro\ss, K\"{o}nemann, and Sanit\'{a} \cite{FGKS} 
augmented the {\sc $k$-Bundle LP} of \cite{A} by additional constraints -- $\{0,\frac{1}{2}\}$-Chv\'{a}tal-Gomory Cuts,
to achieve ratio $1.5+\eps$ in $n^{{(M/\eps^2)}^{O(1)}}$ time, 
thus almost matching the best known ratio for unit costs \cite{EFKN-APPROX,KN-TAP}.  
Our result in Theorem~\ref{t:main}, done independently, shows 
that already the {\sc $k$-Bundle LP} has integrality gap closer to $1.5$ than to $2$.
Our version of the algorithm of \cite{A} is also simpler than the one in \cite{FGKS}.
In fact, combining our approach with \cite{FGKS} enables to achieve ratio 
$1.5+\eps$ in $2^{O(M/\eps^2)} \cdot poly(n)$ time. 
Note that this allows to achieve this ratio for logarithmic costs, and not only for constant costs.
We will provide an additional comparison of our results and those in \cite{FGKS} in Section~\ref{ss:FGKS}.

Very recently  the natural ratio $1.5$ was improved to a smaller constant by Grandoni, Kalaitzis \& Zenklusen \cite{GCZ} . 
Their approach also works for small integer costs and gives ratio that tends to $1.5$ from below. 

We note that while the running time of the combinatorial algorithm for unit costs of \cite{KN-TAP} is roughly the same as that of finding a maximum matching,
the recent algorithms \cite{FGKS,GCZ} that are based on the approach of Adjiashvili \cite{A} have running time 
$n^{{(M/\eps^2)}^{O(1)}}$, with large constant hidden on the $O(\cdot)$ term; 
this is very high even for unit costs and $\eps=0.1$. 
Our result in Theorem~\ref{t:main} substantially reduces the running time to $2^{O(M/\eps^2)} \cdot poly(n)$.

Let $\diam(T)$ denote the diameter of $T$.
{\TA} admits a polynomial time algorithm when $\diam(T) \leq 3$. 
If $\diam(T)=2$ then $T$ is a star and we get the {\sc Edge-Cover} problem, while the case $\diam(T)=3$ 
is reduced to the case $\diam(T)=2$ by ``guessing'' some optimal solution edge that covers the central $T$-edge.   
The problem becomes NP-hard when $\diam(T)=4$ even for unit costs \cite{FJ}.
We prove that (without solving any LP)  for arbitrary costs
{\TA} with trees of diameter $\leq 5$ admits ratio $3/2$.

Our second main result resolves one of the oldest open questions concerning the problem --
whether for unit costs the integrality gap of the {\CLP} is less than $2$.
This was conjectured in the 90's by Cheriyan, Jord\'{a}n \& Ravi \cite{CJR} for arbitrary costs,
but so far there was no real evidence for this even for unit costs. 
Our second main result resolves this old open question.

\begin{theorem} \label{t:gap}
For unit costs, the integrality gap of the {\CLP}  is at most $28/15=2-2/15$.
\end{theorem}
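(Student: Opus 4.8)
The plan is to reduce to a clean instance and then convert an optimal fractional solution $x$ of the \CLP\ into an integral cover of size at most $\tfrac{28}{15}\sum_e x_e$. First I would run the standard {\TA} preprocessing: close $E$ under shadows ($u'v'$ is a shadow of $uv$ when $T_{u'v'}\subseteq T_{uv}$), which changes neither the integral optimum nor the value of the \CLP, since any use of a shadow can be replaced at the same unit cost by a parent link, and fractional mass can be shifted from a shadow to its parent without losing feasibility. After shadow-completion we may assume that $T$ has no node of degree $2$, that every link joins two leaves of $T$, and that the fixed optimal fractional solution $x$ is supported on leaf-to-leaf links. In this form ``$J$ covers $T$'' means: for every $T$-edge $f$, $J$ contains a link crossing the bipartition of the leaves obtained by deleting $f$.

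The heart of the argument is an iterative scheme, in the spirit of the algorithm behind Theorem~\ref{t:main} but benchmarked against the \CLP. Root $T$, and repeatedly locate a \emph{minimal} rooted subtree $S$ that is ``hard'' — one on which the restriction of $x$ cannot be rounded for free, chosen as deep as possible — add to the solution a small integral link set covering all $T$-edges inside $S$, then contract $S$ (now covered) and recurse. The lemma to prove is a \emph{local charging} bound: for a minimal hard $S$, the number of added links is at most $\tfrac{28}{15}$ times the \CLP-mass that $S$ ``consumes'' (roughly, the $x$-weight carried by links lying entirely inside $S$, apportioned to the $T$-edges they cover inside $S$). Since a link counted for one iteration is swallowed by a contraction and cannot be counted again — which is exactly why $S$ is taken minimal and then contracted — the consumed masses over all iterations are essentially disjoint and sum to at most $\sum_e x_e$; multiplying by $\tfrac{28}{15}$ yields the bound.

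Establishing the local charging bound reduces, after the reductions, to analyzing a short list of small ``gadget'' shapes that a minimal hard $S$ can take — essentially paths, single forks, and small ``bunches'' of leaf-edges hanging on one node — and, for each, balancing an even/pairing covering against the odd-parity case where one extra link (possibly routed through the top node of $S$) is needed; the unit-cost assumption is what lets us equate ``cost'' with ``number of links'' in this accounting, and the arithmetic of the tight gadgets is what produces $\tfrac{28}{15}=2-\tfrac2{15}$ rather than the trivial $2$. I expect the same gadget analysis to give, more cleanly, the companion fact that the ``bunch''-strengthened relaxation $\BLP$ — the \CLP\ plus the constraints $\sum_{e\in\psi(B)}x_e\ge\tau(B)$ over bunches $B$ — has half-integral extreme points and integrality gap at most $7/4$: there the fractional part of a half-integral extreme point splits into exactly these gadgets and is rounded in place.

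The main obstacle is the local charging lemma, and within it the odd-parity obstructions: the \CLP\ cannot detect these (which is why its gap exceeds $1$, and is in fact at least $3/2$ by~\cite{CKKK}), so the extra link resolving a bad gadget must be paid for against \CLP-mass spent elsewhere in $S$. Minimality of $S$ bounds the mass available for this charge, but one must verify by a complete case analysis that no gadget shape forces overhead worse than $\tfrac{28}{15}$, and — the subtler point — that links crossing the boundary of $S$ are not charged twice by consecutive iterations. Once this is in place, the rounding and the final constant are a bounded, if tedious, calculation.
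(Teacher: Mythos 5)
Your high-level plan is a primal charging argument: find a minimal hard subtree $S$, cover it integrally, contract, recurse, and locally charge the links added at each step to the fractional mass ``consumed'' by that step, so that the consumed masses over iterations are disjoint and sum to at most $\sum_e x_e$. This is genuinely different in mechanism from what the paper does. The paper proves Theorem~\ref{t:gap} by \emph{dual fitting}: it runs (a simplified version of) the algorithm of Even, Feldman, Kortsarz and Nutov, which iteratively contracts \emph{semi-closed} subtrees found via Lemma~\ref{l:B}, and in parallel constructs an explicit (possibly infeasible) dual vector $y$ indexed by \emph{$T$-edges}, with initial values $\{1,4/5,14/15,2/15\}$ assigned according to the matching/stem structure of the leaves, and then updated locally ($\pm 2/5$ on certain $T$-edges) at each contraction. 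One then proves by induction (Lemma~\ref{l:levels}) a credit invariant ($\pi(c)\ge 1$ on compound nodes, giving $|J|\le\sum_f y_f$) and a load invariant ($\sigma(e)\le 28/15$), and weak duality finishes the argument.

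The reason the paper's bookkeeping succeeds and yours, as stated, has a gap is exactly the point you flag as ``the subtler point.'' In your accounting the charge target is the $x$-mass of links ``lying entirely inside $S$'' (or some apportioned share of cross-boundary links). But a minimal hard $S$ need not carry enough interior $x$-mass to pay $\tfrac{28}{15}^{-1}$ per added link, and cross-boundary links survive contraction as links incident to the new compound node, so their $x_e$ is available to future iterations too — you have not produced a rule that apportions their mass once and for all, nor verified that any such apportionment is consistent with the $28/15$ factor. The paper avoids this entirely by charging against dual variables $y_f$ on the $T$-edges of $S$: those $T$-edges vanish when $S$ is contracted, so they are consumed exactly once by construction, and the whole double-counting question never arises. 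This is not a cosmetic choice; it is what makes the induction close.

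The second and more basic gap is that the ``local charging lemma'' — the statement that every minimal hard $S$ costs at most $28/15$ times its consumed mass — is asserted, not proved. That lemma \emph{is} the theorem; ``a complete case analysis over a short list of gadget shapes'' is precisely the content that must be supplied, and the specific constant $28/15$ emerges from the paper's worst case, namely contracting a dangerous-adjacent tree of type~(i) (Fig.~\ref{f:hard}(a)) where a link from the unmatched leaf $a$ to an ancestor of $w$ picks up two ``$+$'' updates of $2/5$. Without reproducing an analysis at that level of detail you have not bounded anything.

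Finally, your closing aside overclaims: the paper does \emph{not} show that the {\BLP} (or the $3$-{\BLP}) has half-integral extreme points, nor does it derive the $7/4$ bound that way. Theorem~\ref{t:gap'} is proved by the same dual-fitting machinery with a different initial assignment, exploiting the $3$-bunch dual variables at stems. The half-integrality results in the paper (Lemma~\ref{l:half}, Corollary~\ref{c:half}) concern extreme points of the {\sc Cut-Polyhedron} for \emph{spider-shaped} instances, which is a different statement altogether.

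In short: your approach is a reasonable alternative strategy, but as written it leaves the entire quantitative core unproved and has an unresolved double-charging issue that the paper's dual-fitting framework is specifically designed to eliminate. If you want to pursue the primal route, you must (a) fix a charging target that is provably consumed exactly once — charging to $T$-edges of the contracted subtree, as the dual fitting implicitly does, is the natural fix — and (b) carry out the exhaustive case analysis over the non-dangerous semi-closed trees of Lemma~\ref{l:B} to verify the $28/15$ constant.
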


In addition, we will show that for unit costs, another natural simple LP-relaxation, 
has integrality gap at most $7/4$.


\section{Algorithm for bounded costs (Theorem~\ref{t:main})}

The Theorem~\ref{t:main} algorithm is a modification of the algorithm of \cite{A}. 
We emphasize some differences. We use the {\kBLP} instead of the {\sc $k$-Bundle-LP} of \cite{A}. 
But, unlike \cite{A}, we do not solve our LP at the beginning.
Instead, we combine binary search with the ellipsoid algorithm as follows.
We start with lower and upper bounds $p$ and $q$ on the value of the {\kBLP}, e.g., 
$p=0$ and $q$ is the cost of some feasible solution to the problem.
Given a ``candidate'' $x$ with $q \leq c^\T x \leq p$,
the outer iteration (see Algorithm~\ref{alg:rounding}) of the entire algorithm 
either returns a solution of cost at most $(\rho+\frac{8}{3}\frac{\la M}{k-\la M}+\frac{2}{\la})c^\T x$ 
or a constraint of the {\kBLP} violated by $x$;
we show that this can be done in time $4^k \cdot poly(n)$, rather than in time $n^{k^{O(1)}}$ as in \cite{A}.
We set $p \gets \frac{p+q}{2}$ in the former case and $q \gets \frac{p+q}{2}$ in the latter case 
and continue to the next iteration, terminating when $p-q$ is small enough.
This essentially gives a $4^k \cdot poly(n)$ time separation oracle for the {\kBLP} 
(if a violated $k$-branch constraint is found).
Since the ellipsoid algorithm uses a polynomial number of calls to the separation oracle,
the running time is $4^k \cdot poly(n)$.
Note that checking whether $x \in \Pi^{Cut}$ is trivial, 
hence for simplicity of exposition we will assume that the ``candidate'' $x$ is in $\Pi^{Cut}$.

For a set $S$ of $T$-edges we denote by $T/S$ the tree obtained from $T$ by contracting every $T$-edge of $S$.
This defines a new {\TA} instance (that may have loops and parallel edges),
where contraction of a $T$-edge $uv$ leads to shrinking $u,v$ into a single node in the graph $(V,E)$ of edges. 
In the algorithm, we repeatedly take a certain complete rooted subtree $\hat{S}$, and either find 
a $k$-branch-constraint violated by some branch in $\hat S$, or a ``cheap'' cover $J_S$ of a subset $S$ 
of the $T$-edges of $\hat S$; in the latter case, we add $J_S$ to our partial solution $J$, 
contract $\hat S$, and iterate on the instance $T \gets T/\hat S$.
At the end of the loop, the edges that are still not covered by the partial solution $J$ are covered by a different 
procedure, by a total cost $\frac{2}{\la} \cdot c^\T x$, as follows.

We call a $T$-edge $f \in F$ {\bf $\la$-thin} if $x(\psi(f)) \leq \la$, and $f$ is {\bf $\la$-thick} otherwise.
We need the following lemma from \cite{A}, for which we provide a proof for completeness of exposition.

\begin{lemma} [\cite{A}] \label{l:thick}
There exists a polynomial time algorithm that given $x \in \Pi^{Cut}$, $\la>1$, and a set $F' \subs F$ of 
$\la$-thick $T$-edges computes a cover $J'$ of $F'$ of cost $\leq \frac{2}{\la} \cdot c^\T x$. 
\end{lemma}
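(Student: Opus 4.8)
The plan is to exploit the standard primal fitting / LP-rounding idea specialized to $\lambda$-thick edges: if every $T$-edge of $F'$ is covered by a fractional mass of more than $\lambda$, then scaling $x$ up by $1/\lambda$ gives a fractional cover of $F'$ of cost $\frac{1}{\lambda}c^\T x$ in which every edge of $F'$ receives coverage $>1$; intuitively this should be roundable to an integral cover losing only another factor of $2$, which is exactly the classical factor for covering a laminar family of cuts (here the cuts $\psi(f)$ for $f \in F'$ form a laminar-like structure inherited from the tree).

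Concretely, first I would observe that the edge set $\psi(f)$ for a $T$-edge $f$ consists of all edges $uv \in E$ whose tree path $T_{uv}$ crosses $f$; the deletion of $f$ splits $T$ into two components, so $\psi(f)$ is exactly the set of edges of $E$ crossing that fundamental cut. Thus a cover of $F'$ is a set of edges hitting a family of cuts indexed by $F'$. The crucial structural fact is that these cuts, arising from edges of a tree, are \emph{cross-free}: for two $T$-edges $f,f'$ the two sides of the cut of $f$ and the two sides of the cut of $f'$ are nested or disjoint once we pick the ``small'' side consistently (using the root $r$, take the side not containing $r$, i.e. the set of descendants below $f$). So $F'$ corresponds to a laminar family $\mathcal{L}$ of subsets of $V$, and we must buy edges of $E$ so that every member of $\mathcal{L}$ has an edge leaving it.

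Next I would run the well-known primal-dual (or iterative LP-rounding) algorithm for covering a laminar family by edges: because the family is laminar, the natural LP has the property that rounding up the edges selected by a reverse-delete / primal-dual pass costs at most twice the fractional optimum. Feeding in the scaled vector $x/\lambda$ — which is feasible for this covering LP restricted to $\mathcal{L}$ since $x(\psi(f)) > \lambda$ for each $f \in F'$ means $(x/\lambda)(\psi(f)) > 1 \ge 1$ — the algorithm outputs an integral cover $J'$ of $F'$ with $c(J') \le 2 \cdot \frac{1}{\lambda} c^\T x = \frac{2}{\lambda} c^\T x$. All of this runs in polynomial time: building $\mathcal{L}$, solving or primal-dual-processing the laminar cover LP, and returning $J'$.

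The main obstacle I anticipate is making the ``factor $2$ for laminar cover'' step fully rigorous in this weighted setting: one must either invoke a clean black-box statement (the $2$-approximation / integrality-gap-$2$ result for covering a cross-free family of cuts, essentially Jain's or the classical primal-dual bound) or re-derive it. The cleanest route is probably to note that $\{x(\psi(f)) \ge 1 : f \in F\}$ is precisely the {\CLP}, whose integrality gap is $2$ by \cite{FJ}; restricting to $F'$ and using that $x/\lambda$ is feasible for the sub-system gives the bound directly, since the rounding in \cite{FJ} produces a cover of exactly those edges whose constraints it was asked to satisfy. I would state it that way to keep the proof short, citing \cite{FJ} for the rounding and only verifying that its output, when asked to cover $F'$ using the input $x/\lambda$, has the claimed cost.
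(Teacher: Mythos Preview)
Your proposal is correct and follows essentially the same approach as the paper: observe that $x/\lambda$ is feasible for the {\CLP} restricted to $F'$ (since each $f\in F'$ is $\lambda$-thick), and then invoke a known $2$-approximation relative to the LP optimum (the paper cites \cite{FJ,GGPS,Jain}) to obtain an integral cover of cost at most $\frac{2}{\lambda}c^\T x$. Your additional discussion of the laminar/cross-free structure is not needed but does no harm.
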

\begin{proof}
Since all $T$-edges in $F'$ are $\la$-thick, $x/\la$ is a feasible solution to the {\CLP} for covering $F'$.
Thus any polynomial time algorithm that computes a solution $J'$ of cost at most 
$2$ times the optimal value of the {\CLP} for covering $F'$ has the desired property.
There are several such algorithms, see \cite{FJ,GGPS,Jain}.
\qed
\end{proof}

We say that a complete rooted subtree $S$ of $T$ is a {\bf $(k,\la)$-subtree}
if $S$ has at least $k$ leaves and if either the parent $T$-edge $f$ of $S$ is $\la$-thin or $s=r$.
For $\la=\Theta(1/\eps)$ and  $k=\Theta(M/\eps^2)$ we choose $\hat S$ to be 
{\em an inclusionwise minimal $(k,\la)$-subtree}.
Let us focus on the problem of covering such $\hat S$.
Let $S'$ be the set of $T$-edges of the inclusionwise maximal subtree of $\hat S$ that contains the root $s$ of $\hat S$ 
and has only $\la$-thick $T$-edges (possibly $S=\empt$); see Fig.~\ref{f:k-trees}(a).
We postpone covering the $T$-edges in $S'$ to the end of the algorithm,
so we contract $S'$ into $s$ and consider the tree $S \gets \hat S/S'$; see Fig.~\ref{f:k-trees}(b). 
In $S$, every branch $B$ hanging on $s$ has less than $k$ leaves, by the minimality of $S$,
hence it has a corresponding constraint in the {\kBLP}. 
We will show that for a $k$-branch $B$ an optimal set of edges that covers $B$ 
can be computed in time $4^k \cdot poly(n)$.
If $\sum_{e \in \psi(B)} c_ex_e<\tau(B)$ for some branch $B$ hanging on $s$ in $S$, 
then we return the corresponding $k$-branch constraint violated by $x$; 
otherwise, we will show how to compute a ``cheap'' cover of $S$. 
More formally, in the next section we will prove:

\begin{figure} 
\centering
\includegraphics[width=9cm]{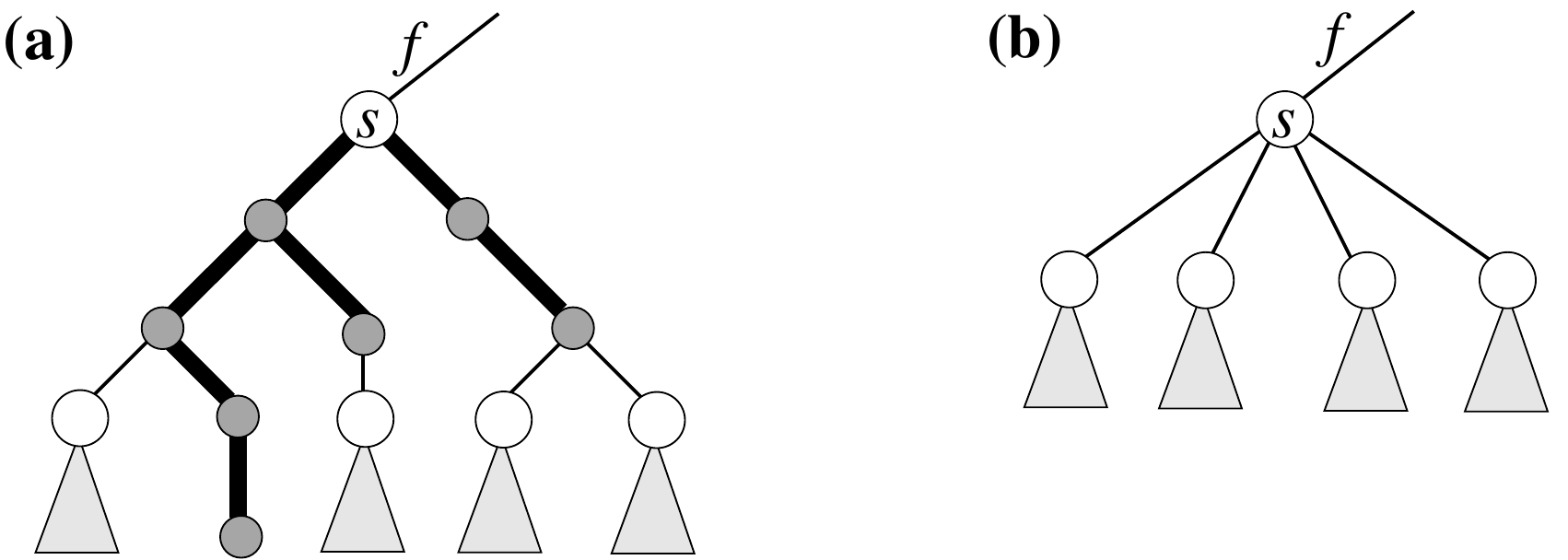}
\caption{Branches hanging on $s$ after contracting $S'$; $\la$-thick $T$-edges are shown by thick lines.}
\label{f:k-trees}
\end{figure}

\begin{lemma} \label{l:main}
Suppose that we are given a {\TA} instance and $x \in \Pi^{Cut}$ 
such that any complete rooted proper subtree of the input tree has less than $k$ leaves.
Then there exists a $4^k \cdot poly(n)$ time algorithm that either finds 
a $k$-branch constraint violated by $x$, or computes a solution 
of cost $\leq \rho \sum_{e \in E \sem R}c_ex_e+\frac{4}{3} \sum_{e \in R} c_ex_e$,
where $\rho$ is as in Theorem~\ref{t:main} and $R$ is the set of edges in $E$ incident to the root.
\end{lemma}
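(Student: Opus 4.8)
The plan is to reduce the problem to covering a single complete rooted subtree whose branches are all ``small'' (have fewer than $k$ leaves), and then to cover such a subtree either by detecting a violated $k$-branch constraint or by a rounding argument that treats root-incident edges specially. First I would let $s$ be a child of the root $r$ whose complete rooted subtree, together with the parent $T$-edge $sr$, forms a branch $B_s$; since every complete rooted proper subtree has fewer than $k$ leaves, each such $B_s$ is a $k$-branch and hence carries a constraint $\sum_{e\in\psi(B_s)}c_ex_e\ge\tau(B_s)$ of the {\kBLP}. For each such $B_s$ I check in $4^k\cdot poly(n)$ time (using the claimed exact algorithm for covering a $k$-branch, which I may assume is established in the next section) whether the constraint is violated; if it is, I return it and stop. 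So from now on I may assume $\sum_{e\in\psi(B_s)}c_ex_e\ge\tau(B_s)$ for every branch $B_s$ hanging on $r$.

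The heart of the argument is the rounding step under this assumption. Here I would follow the ``$\rho$-approximation for trees of bounded diameter'' philosophy: the edges of $E$ split into those incident to $r$ (the set $R$) and those not incident to $r$. An edge $e=uv\in R$ with, say, $u=r$ covers exactly the $T$-edges on the path $T_{rv}$, i.e. a prefix of root-to-node path; an edge not in $R$ lies entirely inside one branch. The idea is to run the known combinatorial $\rho$-approximation (ratio $\tfrac{12}{7}$ for general costs, $1.6$ for unit costs) for {\TA} on each branch separately using the ``shadow'' of $x$ restricted to non-root edges, while the root-incident edges are handled by a cheaper, half-integral-flavoured rounding that loses only a factor $\tfrac{4}{3}$ rather than $\rho$. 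Concretely, for each branch $B_s$ the assumption $\sum_{e\in\psi(B_s)}c_ex_e\ge\tau(B_s)$ certifies that the fractional cost of $x$ inside $\psi(B_s)$ pays for an optimal integral cover of $B_s$; running the $\rho$-approximation on each branch and summing yields an integral cover of cost at most $\rho\sum_{e\in E\sem R}c_ex_e$ plus a contribution from root edges. The saving on root edges comes from the fact that the $T$-edges that still need root edges to be covered form, after contracting each branch, a star-like (diameter $\le 2$) instance, where {\TA} is an {\sc Edge-Cover} problem and the {\CLP} has integrality gap at most $\tfrac{4}{3}$ for the cost carried by $x$ on $R$ — so those are covered at cost $\tfrac{4}{3}\sum_{e\in R}c_ex_e$.

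I expect the main obstacle to be making the split between $R$ and $E\sem R$ \emph{clean}: an edge $e=uv\in R$ can help cover $T$-edges deep inside a branch, so one must be careful that when covering branch $B_s$ ``internally'' one does not implicitly rely on root edges, and conversely that the residual star instance solved with the $\tfrac{4}{3}$ bound really only charges $x$ on $R$. The right device is to first contract all $\la$-thick parts as in the surrounding discussion (so that each branch, viewed through $x$, has only thin parent edges and its internal cover is cheap), then to observe that every $T$-edge is covered either by an edge internal to its branch or by a root edge, and to run the branch algorithm on the tree where root edges are deleted, collecting the still-uncovered $T$-edges into the star instance. Verifying that the two pieces together cover all of $F$, that their costs add up to the claimed bound $\rho\sum_{e\in E\sem R}c_ex_e+\tfrac{4}{3}\sum_{e\in R}c_ex_e$, and that the whole thing runs in $4^k\cdot poly(n)$ (the only super-polynomial ingredient being the exact $k$-branch covering subroutine, invoked $O(n)$ times) is then a matter of bookkeeping.
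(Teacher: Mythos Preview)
Your proposal has a genuine gap. The claim that ``an edge not in $R$ lies entirely inside one branch'' is false: a cross-edge $uv$ (one with $r$ an internal node of $T_{uv}$) is not incident to $r$, hence not in $R$, yet its endpoints lie in two distinct branches of $r$. Cross-edges are exactly what makes the per-branch decomposition lossy, and your split into ``branch-internal'' and ``residual root-star'' parts ignores them; in particular, the parent $T$-edge of each branch can only be covered by an $r$-edge or a cross-edge, never by an in-edge, so covering branches ``internally'' after deleting root edges is impossible. A second issue is circularity: you invoke ``the known combinatorial $\rho$-approximation (ratio $\tfrac{12}{7}$)'' on each branch, but no such prior algorithm exists --- the ratio $\tfrac{12}{7}$ is precisely what this lemma establishes. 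The $\tfrac43$ on $R$ likewise does not arise from an Edge-Cover instance on a residual star.

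What the paper actually does is run \emph{two} rounding procedures and return the cheaper output. Writing $C^{\sf in},C^{\sf cr},C^r$ for the fractional cost carried by in-edges, cross-edges, and $r$-edges respectively (so $C^{\sf in}+C^{\sf cr}=\sum_{e\in E\setminus R}c_ex_e$ and $C^r=\sum_{e\in R}c_ex_e$): the first procedure solves each branch \emph{optimally} in $4^k\cdot poly(n)$ time (or reports a violated $k$-branch constraint), yielding total cost at most $C^{\sf in}+2C^{\sf cr}+C^r$ because every cross-edge lies in $\psi(B)$ for exactly two branches $B$. The second procedure removes all non-up in-edges to obtain a \emph{spider-shaped} instance, proves that extreme points of the Cut-LP for such instances are half-integral (and integral on $\delta(r)$), and then applies the $\tfrac43$-rounding of Cheriyan, Jord\'{a}n and Ravi to get cost at most $\tfrac43(2C^{\sf in}+C^{\sf cr}+C^r)$. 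Taking the minimum of the two bounds and balancing at $C^{\sf in}=\tfrac25 C^{\sf cr}$ gives $\tfrac{12}{7}(C^{\sf in}+C^{\sf cr})+\tfrac43 C^r$. The structural half-integrality result for spider-shaped instances is the key ingredient your sketch is missing.
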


To find a cheap cover of $S$, we consider the {\TA} instance obtained from $T/S'$ 
by contacting  into $s$ all nodes not in $S$. 
Note that every edge that was in $\psi(S) \cap \psi(f)$ is now incident to the root.
Thus since $\rho \geq \frac{4}{3}$, Lemma~\ref{l:main} implies: 

\begin{corollary} \label{c:main}
There exists a $4^k \cdot poly(n)$ time algorithm that either finds 
a $k$-branch-constraint violated by $x$, or a cover $J_S$ of $S$
of cost \ $c(J_S) \leq \rho \sum_{e \in \ga(S)}c_ex_e+\frac{4}{3}\sum_{e \in \psi(f)}c_ex_e$,
where $\rho$ is as in Theorem~\ref{t:main} and $\ga(S)$ denotes the set of edges with both endnodes in $S$,
and $f$ is the parent $T$-edge of $S$.
\end{corollary}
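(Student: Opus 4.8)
The plan is to reduce directly to Lemma~\ref{l:main}. Starting from the current tree, we perform the contractions described immediately before the corollary: first contract $S'$ into the root $s$ (this is already done when we pass to $S \gets \hat S/S'$), and then in $T/S'$ contract into $s$ every node that does not lie in $S$. Call the resulting {\TA} instance $T_S$. After these contractions, $s$ is the root of $T_S$, the $T$-edges of $T_S$ are exactly the $T$-edges of $S$, and every complete rooted proper subtree of $T_S$ is a branch hanging on $s$ (more precisely, is contained in such a branch), which by the minimality of $\hat S$ has fewer than $k$ leaves. Hence $T_S$, together with the restriction of $x$, satisfies the hypothesis of Lemma~\ref{l:main}.

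Next I would identify the edge set $R$ of Lemma~\ref{l:main} — the edges of the new instance incident to the root $s$ — with a subset of $\psi(f)$, where $f$ is the parent $T$-edge of $S$. The point, already noted in the text, is that contracting all nodes outside $S$ into $s$ turns every edge that previously had an endpoint outside $S$ into an edge incident to $s$; in particular every edge of $\psi(S)\cap\psi(f)$ becomes incident to $s$, and conversely every edge incident to $s$ in $T_S$ is the image of an edge of $E$ that covers $f$ (since to reach $s$ after the contraction an edge must cross from $S$ to its complement, i.e.\ pass over $f$). Thus $R \subseteq \psi(f)$ under the natural correspondence of edges, while the edges with both endpoints in $S$, i.e.\ $\gamma(S)$, are exactly the non-root edges $E\sem R$. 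Applying Lemma~\ref{l:main} to $T_S$ and $x$ then yields either a violated $k$-branch constraint (which we return), or a cover $J_S$ of all $T$-edges of $T_S$ — equivalently, of $S$ — of cost at most $\rho\sum_{e\in E\sem R}c_ex_e+\frac{4}{3}\sum_{e\in R}c_ex_e \le \rho\sum_{e\in\gamma(S)}c_ex_e+\frac{4}{3}\sum_{e\in\psi(f)}c_ex_e$, using $R\subseteq\psi(f)$ and $x,c\ge 0$. This is exactly the claimed bound, and the running time is inherited verbatim as $4^k\cdot poly(n)$.

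The only genuinely nontrivial point — and the one I would state most carefully — is the bookkeeping that $R = E\sem R$ splits as claimed, i.e.\ that after the two contractions the root-incident edges are precisely the images of $\psi(f)$-edges (up to the harmless enlargement to the inclusion $R\subseteq\psi(f)$) and the remaining edges are precisely those lying inside $S$. This needs a short argument that a non-loop edge of the contracted instance is incident to $s$ iff exactly one of its original endpoints lies strictly inside $S$ (an endpoint in $S'$ or in $V\sem V(\hat S)$ gets mapped to $s$), together with the observation that such an edge covers $f$. Everything else is immediate from Lemma~\ref{l:main} and the monotonicity $\rho\ge\frac{4}{3}$ together with nonnegativity of $x$ and $c$; I expect no calculation beyond this splitting.
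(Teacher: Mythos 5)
Your reduction to Lemma~\ref{l:main} and the verification of its hypothesis via the minimality of $\hat S$ are fine. The gap is exactly in the bookkeeping you flagged as ``the only genuinely nontrivial point'': the claims $R\subseteq\psi(f)$ and $E\sem R=\gamma(S)$ are both false in general. An edge of the contracted instance is incident to $s$ whenever exactly one of its original endpoints lies in $V(\hat S)\sem V(S')$; the other endpoint may lie outside $\hat S$ (and then the edge does cover $f$), \emph{but it may also lie in $V(S')$ or be $s$ itself} --- a node that was absorbed into $s$ already when passing to $T/S'$. In the latter case both original endpoints lie in $\hat S$, so the edge does not cover $f$: it is in $R$ but not in $\psi(f)$, while it is in $\gamma(S)$ (one endpoint becomes the contracted node $s$, the other is in $S$). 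Hence $R\not\subseteq\psi(f)$ and $E\sem R\subsetneq\gamma(S)$, and the justification you give for the final inequality (``using $R\subseteq\psi(f)$ and $x,c\ge 0$'') does not go through.

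The inequality itself is correct, but the load-bearing fact is $\rho\ge\frac{4}{3}$, which you mention in passing yet never actually invoke. The right accounting is $E\sem R\subseteq\gamma(S)$ and $R\subseteq\gamma(S)\cup\psi(f)$, with $R\cap\gamma(S)$ being precisely the troublesome edges above (coefficient $\frac{4}{3}$ on the left but $\rho$ on the right) and $R\cap\psi(f)$ the edges that genuinely cross $f$. Then
\[
\rho\!\!\sum_{e\in E\sem R}\!\!c_ex_e+\frac{4}{3}\sum_{e\in R}c_ex_e
\;\le\;
\rho\!\!\sum_{e\in E\sem R}\!\!c_ex_e+\rho\!\!\sum_{e\in R\cap\gamma(S)}\!\!c_ex_e+\frac{4}{3}\!\!\sum_{e\in R\cap\psi(f)}\!\!c_ex_e
\;\le\;
\rho\sum_{e\in\gamma(S)}c_ex_e+\frac{4}{3}\sum_{e\in\psi(f)}c_ex_e,
\]
where the first step uses $\rho\ge\frac{4}{3}$ and the second uses nonnegativity. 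This is exactly why the paper writes ``since $\rho\ge\frac{4}{3}$, Lemma~\ref{l:main} implies'' immediately before the corollary; that hypothesis is not incidental but is what makes the charging argument work.
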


The outer iteration of the algorithm is as follows:

\medskip \medskip

\begin{algorithm}[H]
\caption{{\sc Outer-Iteration}$(T=(V,F),E,x,c,k,r,\la)$}
\label{alg:rounding}
$J \gets \empt$, $F' \gets \empt$ \\
\While{\em $T$ has at least $2$ nodes}
{
let $\hat S$ be an inclusionwise minimal $(k,\la)$-subtree of $T$ \\
let $S'$ be the edge-set of the inclusionwise maximal subtree of $\hat S$ that contains the root $s$ of $\hat S$
and has only $\la$-thick edges \\
apply the algorithm from {\bf Corollary~\ref{c:main}} on $S \gets \hat S/S'$ \\
if {\bf Corollary~\ref{c:main}} algorithm returns  a cover $J_S$ of $S$ then do: \newline
$F' \gets F' \cup S'$, $J \gets J \cup J_S$, $T \gets T/\hat S$ \\
else, return a $k$-branch constraint violated by $x$ and STOP 
}
compute a cover $J'$ of $F'$ of cost $c(J') \leq \frac{2}{\la} \cdot c^\T x$ using {\bf Lemma~\ref{l:thick}} algorithm \\
\Return{$J \cup J'$}
\end{algorithm}

\medskip \medskip

Note that at step~7 the $T$-edges in $F'$ are all $\la$-thick and thus Lemma~\ref{l:thick} applies. 
We will now analyze the performance of the algorithm assuming than 
no $k$-branch-constraint violated by $x$ was found. 
Let $\de(S)$ denote the set of edges with exactly one endnode in $S$ 
and $\ga(S)$ the set of edges with both endnodes in $S$.
Let $f$ be the parent $T$-edge of $S$.
Since $f$ is $\la$-thin 
$$
\sum_{e \in \psi(f)}c_ex_e \leq \sum_{e \in \psi(f)}Mx_e \leq M \cdot x(\psi(f)) \leq M\la \ .
$$
Since $x(\de(v)) \geq 1$ for every leaf $v$ of $S$, $c_e \geq 1$ for every $e \in E$, 
and since $S$ is a $(k,\la)$-subtree 
$$
2\sum_{e \in \ga(\hat S)} c_ex_e = \sum_{v \in \hat S} \sum_{e \in \de(v)}c_ex_e-\sum_{e \in \psi(f)}c_ex_e 
\geq \sum_{v \in \hat S \sem \{s\}} x(\de(v))-\la M  \geq k-\la M \ .
$$

Consider a single iteration in the while-loop. Let $\Delta(c^\T x)$ denote 
the decrease in the LP-solution value as a result of contracting $\hat S$. Then
$$
\Delta(c^\T x)=\sum_{e \in \ga(\hat S)} c_ex_e \geq \frac{k-\la M}{2} \ .
$$
On the other hand, by Lemma~\ref{l:main}, the partial solution cost increases by at most
$$
c(J_S) \leq \rho \sum_{e \in \ga(S)}c_ex_e + \frac{4}{3}\sum_{e \in \psi(f)} c_e x_e 
         \leq \rho \sum_{e \in \ga(\hat S)}c_ex_e + \frac{4}{3}\la M \ .
$$
Thus
$$
\frac{c(J_S)}{\Delta(c^\T x)} \leq \rho+\frac{8}{3} \frac{\la M}{k-\la M} \ .
$$
The while-loop terminates when the LP-solution value becomes $0$, hence by a standard local-ratio/induction 
argument we get that at the end of the while-loop $c(J) \leq \left(\rho+\frac{8}{3}\frac{\la M}{k-\la M}\right)c^\T x$.
At step~7 we add an edge set of cost $\leq \frac{2}{\la}c^\T x$, and Theorem~\ref{t:main} follows.
It only remains to prove Lemma~\ref{l:main}, which we will do in the subsequent sections.

\subsection{Proof of Lemma~\ref{l:main}} \label{ss:main}

Assume that we are given an instance $T=(V,F),E,c$ of {\TA} with root $r$ and $x$ as in Lemma~\ref{l:main}. 
It is known that {\TA} instances when $T$ is a path can be solved in polynomial time. 
Thus by a standard ``metric completion'' type argument 
we may assume that the graph $(V,E)$ is a complete graph and that $c_{uv}=\tau(T_{uv})$ for all $u,v \in V$.
Indeed, then for each $u,v \in V$ corresponds a set $P_{uv}$ of edges of cost $c_{uv}=\tau(T_{uv})$ that covers $T_{uv}$,
and whenever and edge $uv$ is chosen to the solution, it can be replaced by $P_{uv}$ without increasing the cost.
Note that we use this assumption only in the proof of Lemma~\ref{l:main},
where the running time does not depend on the maximum cost $M$ of an edge in $E$. 
Let us say that an edge $uv \in E$ is: 
\begin{itemize}
\item
a {\bf cross-edge} if $r$ is an internal node of $T_{uv}$;
\item
an {\bf in-edge} if $r$ does not belong to $T_{uv}$;
\item
an {\bf $r$-edge} if $r=u$ or $r=v$;
\item
an {\bf up-edge} if one of $u,v$ is an ancestor of the other.
\end{itemize}

For a subset $E' \subs E$ of edges the {\bf $E'$-up vector of $x$} is obtained from $x$ as follows:
for every  non-up edge $e=uv \in E'$ increase $x_{ua}$ and $x_{va}$ by $x_e$ and then reset $x_e$ to $0$,
where $a$ is the least common ancestor of $u$ and $v$.
The {\bf fractional cost} of a set $J$ of edges w.r.t. $c$ and $x$ is defined by $\sum_{e \in J} c_ex_e$.
Let $C_x^{\sf in}$, $C_x^{\sf cr}$, and $C_x^r$ denote the fractional cost of in-edges, 
cross-edges, and $r$-edges, respectively, w.r.t. $c$ and $x$.
We fix some $x^* \in \Pi^{Cut}$ and denote by $C^{\sf in}$, $C^{\sf cr}$, and $C^r$ 
the fractional cost of in-edges, cross-edges, and $r$-edges, respectively, w.r.t. $c$ and $x^*$.
We give two rounding procedures, given in Lemmas \ref{l:round1} and \ref{l:round2}.
The rounding procedure in Lemma~\ref{l:round1} is similar to that of \cite{A},
but we show that it can be implemented in time $4^k \cdot poly(n)$ instead of $n^{k^{O(1)}}$.

\begin{lemma} \label{l:round1}
There exists a $4^k \cdot poly(n)$ time algorithm that 
either finds a $k$-branch inequality violated by $x^*$, 
or returns an integral solution of cost at most $C^{\sf in}+2C^{\sf cr}+C^r$.
\end{lemma}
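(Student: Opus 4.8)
The plan is to exploit the structure of the rounding performed in \cite{A}: first transform $x^*$ by taking an up-vector, then decompose the resulting fractional solution into ``bunches'' hanging on leaf-most subtrees, and round each bunch either by buying a cheap local cover or by detecting a violated $k$-branch inequality. Concretely, I would first replace $x^*$ by its $E^{\sf in}$-up vector $\bar x$, obtained by pushing every in-edge $uv$ to the pair of up-edges $ua,va$ where $a$ is the least common ancestor of $u,v$. Since $c_{ua}+c_{va}\le 2c_{uv}$ by the triangle inequality $c_e=\tau(T_e)$ (every $T$-edge covered by $uv$ lies on $T_{ua}$ or on $T_{va}$), the fractional cost of the in-edges at most doubles under this operation, while cross-edges and $r$-edges are untouched; thus the fractional cost of $\bar x$ is at most $C^{\sf in}+2C^{\sf cr}+C^r$ after we also double-count $C^{\sf cr}$ appropriately. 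The point of this step is that $\bar x$ is supported on up-edges together with cross-/$r$-edges, which are ``directed toward the root'' and hence easy to combine into paths.

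Next I would localize the rounding to the leaf-most structures of $T$. Because every complete rooted proper subtree of $T$ has fewer than $k$ leaves, any branch hanging on a node lies inside a $k$-branch and so has a corresponding constraint in the {\kBLP}. I would walk up from the leaves: for each inclusionwise-minimal complete rooted subtree $S$ that still has uncovered $T$-edges, look at the branch $B$ it spans together with its parent $T$-edge, and compute $\tau(B)$ exactly. The key enabling claim here — which I expect to prove by a bottom-up dynamic program over the at most $k$ leaves of $B$, where a state records, for the current node, which ``pending'' up-edges from below are still open — is that an optimal cover of a $k$-branch can be found in time $4^k\cdot poly(n)$; the $4^k$ arises because a subset of leaves that must be connected upward can be encoded in $2^k$ states and the DP merges two children states, giving $2^k\cdot 2^k=4^k$. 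If the local fractional mass $\sum_{e\in\psi(B)}c_e\bar x_e$ (equivalently $\sum_{e\in\psi(B)}c_ex^*_e$, since the up-transform does not move an in-edge outside the branch that contains both its endpoints and their lca) is strictly less than $\tau(B)$, we have found a violated $k$-branch inequality and stop. Otherwise we buy the optimal cover of $B$, charge its cost $\tau(B)$ against the local fractional mass it consumes, contract $B$, and recurse.

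The accounting is then a global charging argument. Every in-edge's fractional weight, after the up-transform, is confined to the branch spanned by the least common ancestor of its endpoints, so it is consumed by exactly one local cover and contributes its doubled weight once; every cross-edge or $r$-edge can be consumed by covers on both ``sides'' of the root, which is exactly why the cross-edge term appears with coefficient $2$ and the $r$-edge term with coefficient $1$ (an $r$-edge lies in only one branch hanging off the root). Summing $\tau(B)$ over all contracted branches against these disjoint (up to the root-doubling) fractional charges yields total integral cost at most $C^{\sf in}+2C^{\sf cr}+C^r$, which is the claimed bound. I expect the genuinely delicate part to be the $4^k\cdot poly(n)$ exact computation of $\tau(B)$ for a $k$-branch: one must argue that an optimal cover uses only up-edges and cross/$r$-edges of the ``completed'' metric instance, set up the DP states so that merging children is correct and runs in $2^{O(k)}$, and verify that the separation oracle this provides is consistent with the binary-search-plus-ellipsoid framework of the outer algorithm — everything else is the routine up-vector inequality and a disjointness bookkeeping over the tree.
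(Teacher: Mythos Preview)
Your argument has a genuine gap in the cost accounting: the $E^{\sf in}$-up-vector step is not only unnecessary here, it actively loses a factor of $2$ on the in-edge term and so cannot yield the claimed bound. After the up-transform, each in-edge $uv$ becomes two up-edges $ua,va$ of combined fractional cost at most $2c_{uv}x^*_{uv}$, and both of these lie in the \emph{same} branch hanging on $r$ (since the least common ancestor $a$ lies in that branch as well). Hence when you sum $\sum_{e\in\psi(B)}c_e\bar x_e$ over the branches $B$, the in-edge contribution is bounded only by $2C^{\sf in}$; together with the cross-edge double-count this gives at best $2C^{\sf in}+2C^{\sf cr}+C^r$, not $C^{\sf in}+2C^{\sf cr}+C^r$. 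Your own phrase ``contributes its doubled weight once'' is precisely the problem. The paper's proof simply omits the up-vector: an in-edge $uv$ already has both endpoints, and hence all of $T_{uv}$, inside a single branch $B$ of $r$, so $uv\in\psi(B)$ for exactly one $B$ and contributes $c_{uv}x^*_{uv}$ once; a cross-edge lies in $\psi(B)$ for two branches and an $r$-edge for one, giving $\sum_{B}\sum_{e\in\psi(B)}c_ex^*_e = C^{\sf in}+2C^{\sf cr}+C^r$ directly. A one-level decomposition into the branches hanging on $r$ (each with fewer than $k$ leaves by the standing hypothesis of Lemma~\ref{l:main}) therefore suffices; your iterative bottom-up contraction adds machinery whose cross-iteration charging you never actually verify.

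For the $4^k\cdot poly(n)$ exact computation of $\tau(B)$, the paper also takes a different and cleaner route than your subset-DP sketch: using the metric completion $c_{uv}=\tau(T_{uv})$ it shortcuts every degree-$2$ node of the branch, leaving at most $2k-1$ tree edges, and then invokes the standard $2^m\cdot poly(n)$ dynamic program for weighted {\sc Set Cover} on $m\le 2k-1$ elements. Your DP over ``pending up-edges from below'' may be salvageable, but as written it does not specify what the state actually is (there can be far more than $k$ up-edges in a branch) nor why the total work is $4^k\cdot poly(n)$.
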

\begin{proof}
Let $\BB$ be the set of branches hanging on $r$.
For every $B \in \BB$ compute an optimal solution $J_B$.
If for some $B \in \BB$ we have $\tau(B)>\sum_{e \in \psi(B)}c_ex^*_e$ then a $k$-branch inequality violated by $x^*$ is found.
Else, the algorithm returns the union $J=\bigcup_{B \in \BB} J_B$ of the computed edge sets. 
As every cross-edge has its endnodes in two distinct branches, while every 
in-edge or $r$-edge has its both endnodes in the same branch, we get
\begin{eqnarray*}
c(J) & \leq & \sum_{B \in \BB} \tau(B) \leq \sum_{B \in \BB} \sum_{e \in \psi(B)}c_ex^*_e
= \sum_{B \in \BB} \left(\sum_{e \in \de(B)}c_ex^*_e+\sum_{e \in \ga(B)}c_ex^*_e\right) \\
      &   =    & 2C^{\sf cr}+C^{\sf in}+C^r \ .
\end{eqnarray*}

It remains to show that an optimal solution in each branch of $r$  
can be computed in time $4^k \cdot poly(n)$. More generally, we will show that 
{\TA} instances with $k$ leaves can be solved optimally within this time bound.
Recall that we may assume that the graph $(V,E)$ is a complete graph and that 
$c_{uv}=\tau(T_{uv})$ for all $u,v \in V$.
We claim that then we can assume that $T$ has no node $v$ with $\deg_T(v)=2$.
This is a well known reduction (e.g. see \cite{MV}).
In more details, we show that any solution $J$ can be converted into a solution of no greater cost 
that has no edge incident to $v$, and thus $v$ can be ``shortcut''.
If $J$ has edges $uv,vw$ then it is easy to see that 
$(J \sem \{uv,vw\}) \cup \{uw\}$ is also a feasible solution,
of cost at most $c(J)$, since $c_{uw} \leq c_{uv}+c_{vw}$.  
Applying this operation repeatedly we may assume that $\deg_J(v) \leq 1$.
If $\deg_J(v)=0$, we are done.
Suppose that $J$ has a unique edge $e=vw$ incident to $v$. 
Let $vu$ and $vu'$ be the two $T$-edges incident to $v$, where assume that $vu'$ is not covered by $e$.
Then there is an edge $e' \in J$ that covers $vu'$. Since $e'$ is not incident to $v$,
it must be that $e'$ covers $vu$. Replacing $e$ by the edge $wu$ gives a feasible 
solution without increasing the cost.

Consequently, we reduce our instance to an equivalent instance with at most $2k-1$ tree edges.
Now recall that {\TA} is a particular case of the {\sc Min-Cost Set-Cover} problem,
where the set $F$ of $T$-edges are the elements and $\{T_e:e \in E\}$ are the sets.
The {\sc Min-Cost Set-Cover} problem can be solved in $2^n \cdot poly(n)$ time
via dynamic programming, where $n$ is the number of elements;
such an algorithm is described in \cite[Sect. 6.1]{C-book} for unit costs,
but the proof extends to arbitrary costs \cite{MC}.
Thus our reduced {\TA} instance can be solved in 
$2^{2k-1} \cdot poly(n) \leq 4^k \cdot poly(n)$ time.
\qed
\end{proof}

For the second rounding procedure \cite{A} proved that for any $\la>1$ 
one can compute in polynomial time an integral solution of cost at most
$2\la C^{\sf in}+\frac{4}{3} \frac{\la}{\la-1}C^{\sf cr}$ . We prove:

\begin{lemma} \label{l:round2}
There exists a polynomial time algorithm that computes a solution 
of cost $\frac{4}{3}(2C^{\sf in}+C^{\sf cr}+C^r)$, 
and a solution of size $2C^{\sf in}+\frac{4}{3}C^{\sf cr}+C^r$ in the case of unit costs.
\end{lemma}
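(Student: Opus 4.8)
The plan is to produce the rounded solution by an up-projection followed by a careful decomposition of the resulting up-edges into cheap covers. First I would replace $x$ by its $E$-up vector $\hat{x}$: for every non-up edge $e=uv$, split its weight equally onto the two up-edges $ua$ and $va$, where $a=\mathrm{lca}(u,v)$. Because $c_{uv}=\tau(T_{uv})\le \tau(T_{ua})+\tau(T_{av})=c_{ua}+c_{av}$ under the metric-completion assumption, this operation does not increase the fractional cost of the corresponding pieces, but it does double the contribution of each cross-edge (it is split into an in-edge part and an $r$-edge part, each inheriting the full cost) while leaving in-edges and $r$-edges untouched; this is exactly where the coefficient $2$ in front of $C^{\sf in}$ and the $+C^{r}$ in the bound come from, and where the cross-edge coefficient drops from $2$ to $\tfrac{4}{3}$. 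So after projection we have a feasible fractional solution supported on up-edges whose fractional cost is at most $2C^{\sf in}+C^{\sf cr}+C^{r}$ if we can round up-edge solutions within a factor $\tfrac{4}{3}$, and at most $2C^{\sf in}+C^{\sf cr}+C^{r}$ exactly (factor $1$ on the size) in the unit-cost case if the up-edge rounding is integral-preserving there.

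Second, I would invoke the structure of up-edge instances. When every edge is an up-edge, {\TA} restricted to a single root-to-leaf path is a (laminar) covering problem, and more generally the up-edge LP on a tree has a well-understood half-integral / $\tfrac{4}{3}$-integral polyhedron. The key step is: given a feasible fractional cover by up-edges, decompose the tree into paths and on each path round the up-edge solution to an integral cover of cost at most $\tfrac{4}{3}$ times its fractional cost, with the $\tfrac{4}{3}$ replaced by $1$ when costs are unit (a chain cover can be taken greedily / by an interval-point argument, and the extreme points are half-integral so the worst rounding loss on a chain is $\tfrac{4}{3}$). I would then glue these path covers together; feasibility across path boundaries follows because an up-edge covering a $T$-edge on one path is entirely contained in the ancestor chain, so no boundary $T$-edge is left uncovered.

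Third, I would track the accounting. The up-projected weight on in-edges is $2C^{\sf in}$? — careful: only the cross-edges get duplicated, in-edges keep weight $C^{\sf in}$, so after projection the in-edge+$r$-edge weight is $C^{\sf in}+C^{\sf cr}+C^{r}$ with the cross contribution appearing twice, once as an up-in-edge and once as an up-$r$-edge; combined with the factor $\tfrac{4}{3}$ from chain rounding one reaches $\tfrac{4}{3}(C^{\sf in}+C^{\sf cr}+C^{r})+\tfrac{4}{3}C^{\sf cr}$? I would need to set up the split so that the statement's exact coefficients $\tfrac{4}{3}(2C^{\sf in}+C^{\sf cr}+C^{r})$ emerge; most likely the right move is to not project cross-edges symmetrically but to charge each cross-edge's full cost to its in-edge side (contributing to the $2C^{\sf in}$ via the factor-$2$ slack already present because in-edges reaching into a branch can be doubled in the chain argument) and its ``free'' side to $C^{r}$. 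The unit-cost refinement then drops the $\tfrac{4}{3}$ because a chain cover by unit-cost intervals has an integral LP.

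The hard part, and the step most likely to need real work, is the chain-rounding lemma with the sharp constant $\tfrac{4}{3}$: showing that a half-integral feasible up-edge solution on a path can always be rounded to an integral cover losing at most a $\tfrac{4}{3}$ factor (and exactly nothing for unit costs). This is where I would spend the most care, probably via an explicit analysis of the extreme points of the path covering polytope or a potential-function / charging argument over the $\tfrac12$-valued edges, since the coefficient is tight and any sloppiness gives $\tfrac32$ instead of $\tfrac43$.
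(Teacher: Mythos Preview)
Your proposal has a genuine gap in the accounting, and it stems from projecting the wrong set of edges.  You take the full $E$-up vector, sending every non-up edge to a pair of up-edges.  A cross-edge $uv$ has $\mathrm{lca}(u,v)=r$, so it is split into two $r$-edges whose total cost can be as large as $2c_{uv}$.  Thus after your projection the fractional cost is at most $2C^{\sf in}+2C^{\sf cr}+C^{r}$, and even a perfect $\tfrac{4}{3}$-rounding of up-edge instances yields only $\tfrac{4}{3}(2C^{\sf in}+2C^{\sf cr}+C^{r})$, strictly worse than the claimed $\tfrac{4}{3}(2C^{\sf in}+C^{\sf cr}+C^{r})$.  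Your own paragraph of self-correction (``careful: only the cross-edges get duplicated, in-edges keep weight $C^{\sf in}$'') never resolves this; there is no asymmetric split that simultaneously keeps the projected vector in $\Pi^{Cut}$ and avoids doubling the cross contribution.

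The paper's move is to project \emph{only the in-edges} (the $E^{\sf in}$-up vector), leaving cross-edges untouched.  The resulting instance is spider-shaped (every in-edge is an up-edge, cross-edges remain), and its LP value is at most $2C^{\sf in}+C^{\sf cr}+C^{r}$.  The real technical work is then a structural lemma you do not have: every extreme point of the Cut-LP on a spider-shaped instance is half-integral (and integral on $\delta(r)$).  With that in hand, the $\tfrac{4}{3}$ factor comes not from any path decomposition but from the Cheriyan--Jord\'an--Ravi rounding of half-integral Cut-LP solutions.  Your chain-by-chain rounding idea does not obviously work, since distinct root-to-leaf chains share $T$-edges near the root and cannot be covered independently without overcounting.

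For unit costs the divergence is larger still.  The paper does \emph{not} claim that up-edge instances have an integral LP.  Instead it shows that an extreme point of $\Pi^{Cut}$ (subject to fixing $C_x^{\sf in}$ and $C_x^{r}$) has a cross-edge support with no even cycle and at most one odd cycle per component; breaking each odd cycle costs at most a $\tfrac{4}{3}$ blow-up of $C^{\sf cr}$ alone, after which cross-edges form a forest.  Only then is the $E^{\sf in}$-up projection applied, and a short combinatorial argument (merging dominated up-edges, then peeling leaves) rounds the result with no further loss.  Your route gives no mechanism for confining the $\tfrac{4}{3}$ factor to $C^{\sf cr}$ while leaving $C^{\sf in}$ and $C^{r}$ at coefficient $2$ and $1$.
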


Consider the case of arbitrary bounded costs. 
If $C^{\sf in} \geq \frac{2}{5}C^{\sf cr}$ we use the rounding procedure from Lemma~\ref{l:round1}
and the rounding procedure from Lemma~\ref{l:round2} otherwise.
In both cases we get $c(J) \leq  \frac{12}{7}(C^{\sf in}+C^{\sf cr})+\frac{4}{3}C^r$.
In the case of unit costs,
if $C^{\sf in} \geq \frac{2}{3}C^{\sf cr}$ we use the rounding procedure from Lemma~\ref{l:round1},
and the procedure from Lemma~\ref{l:round2} otherwise.
In both cases we get $c(J) \leq 1.6(C^{\sf in}+C^{\sf cr})+C^r$.

Lemma~\ref{l:round2} is proved in the next section.
The proof relies on properties of extreme points of the {\sc Cut-Polyhedron} $\Pi^{Cut}$ 
that are of independent interest.

\subsection{Properties of extreme points of the {\sc Cut-Polyhedron} (Lemma~\ref{l:round2})} \label{ss:round2}

W.l.o.g., we augment the {\CLP} by the constraints $x_e \leq 1$ for all $e \in E$, 
while using the same notation as before.
The (modified) {\CLP} always has an optimal solution $x$
that is an {\bf extreme point} or a {\bf basic feasible solution} of $\Pi^{Cut}$.
Geometrically, this means that $x$ is not a convex combination of other points in $\Pi^{Cut}$; 
algebraically this means that there exists a set of $|E|$ inequalities in the system defining $\Pi^{Cut}$ 
such that $x$ is the unique solution for the corresponding linear equations system.
These definitions are known to be equivalent and we will use both of them, c.f. \cite{LRS}.

A set family $\LL$ is {\bf laminar} if any two sets in the family are either 
disjoint or one contains the other. 
Note that {\TA} is equivalent to the problem of covering the laminar family 
of the node sets of the complete rooted proper subtrees of $T$, where an edge covers a node set $S$
if it has exactly one endnode in $S$.
In particular, note that the constraint $x(\psi(f)) \geq 1$ is equivalent to the 
constraint $x(\de(S)) \geq 1$ where $S$ is the node set of the complete rooted subtree with parent $T$-edge $f$. 
Let $\mathbb{N}_0$ denote the set of non-negative integers.

\begin{lemma} \label{l:half}
Let $(V,E)$ be a graph, $\LL$ a laminar family on $V$, and $b \in \mathbb{N}_0^\LL$.
Suppose that for every $S \in \LL$ there is no edge between two distinct children of $S$ and that
the equation system $\{x(\de(S))=b_S:S \in \LL\}$ has a unique solution $0<x^*<1$.
Then $x^*_e=1/2$ for all $e \in E$. Furthermore, each endnode of every $e \in E$ belongs to some $S \in \LL$.
\end{lemma}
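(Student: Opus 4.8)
The plan is to exploit the laminar structure to decouple the linear system. First I would set up the key combinatorial fact: because $\LL$ is laminar and no edge joins two distinct children of any $S\in\LL$, every edge $e=uv$ is ``aligned'' with the family in the sense that for each $S\in\LL$, either both of $u,v$ lie in $S$, or both lie outside $S$, or exactly one of them lies in $S$ with the other in no proper subset of $S$ inside $\LL$ below the relevant node. Concretely, if $e\in\de(S)$ for some $S$, then one endnode of $e$ is ``cut off'' by $S$ while the other is not; the no-edge-between-children hypothesis forbids $e$ from being simultaneously inside two sibling subtrees, which is exactly what makes the incidence pattern of $e$ with $\LL$ a nested ``interval''. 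I expect this alignment observation to be the main obstacle, since everything else follows mechanically once it is in place.

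Next, I would argue that $x^*$ is half-integral by a uniqueness/averaging argument. Suppose $x^*$ is the unique solution of $\{x(\de(S))=b_S:S\in\LL\}$ with $0<x^*<1$. Consider the vector $y$ defined by $y_e = 1 - x^*_e$ for a suitable subset of edges and $y_e=x^*_e$ otherwise — the natural candidate being to ``flip'' edges along the nesting so that each equation $x(\de(S))=b_S$ is preserved. The alignment fact from the first paragraph is what guarantees such a global flip exists and keeps every constraint satisfied; then $y$ is also a feasible solution of the same equation system lying strictly between $0$ and $1$, so by uniqueness $y=x^*$, forcing $x^*_e=1/2$ for every flipped edge. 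To see that \emph{every} edge is flipped (hence every $x^*_e=1/2$), I would observe that an edge $e$ which is never flipped would have to satisfy $x^*_e$ unconstrained by the flip, but then perturbing $x^*_e$ alone — or together with a compensating adjustment — would produce a second solution unless $e$ appears in no equation at all; and an edge in no equation $x(\de(S))=b_S$ contributes nothing, so removing it or changing its value freely contradicts uniqueness of $x^*$ in $(0,1)^E$. An alternative, cleaner route: show directly that the constraint matrix restricted to a laminar family with the ``no edge between siblings'' property is, after the alignment relabelling, equivalent to an interval/network matrix, and a $\{-1,0,1\}$ system $Ax=b$ with $b$ integral and a strictly interior unique solution forces that solution to be half-integral; this is a standard total-unimodularity-type consequence.

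Finally, for the ``furthermore'' clause: if some edge $e=uv$ had an endnode, say $u$, belonging to no $S\in\LL$, then $u$ is never separated by any constraint, so the only equations involving $e$ are those $x(\de(S))=b_S$ with $v\in S$, $u\notin S$ — but by laminarity these $S$ form a chain, and along that chain $e$ always appears with the same sign, so $e$ behaves like a ``free'' coordinate that can be shifted (with compensation up the chain) to yield a different solution, again contradicting uniqueness; hence both endnodes of every edge lie in some member of $\LL$. I would present the half-integrality and the endnode statement together, since both are driven by the same uniqueness argument applied to local perturbations respecting the laminar/aligned structure.
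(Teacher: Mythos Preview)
Your central ``flipping'' argument has a circularity problem. You propose to define $y_e = 1 - x^*_e$ on a suitable subset $F$ of edges and $y_e = x^*_e$ elsewhere, then invoke uniqueness to force $x^*_e = 1/2$ on $F$. But for $y$ to satisfy $y(\de(S)) = b_S$, a short computation shows you need $|F \cap \de(S)| = 2\,x^*(F \cap \de(S))$ for every $S \in \LL$; that is, the average of $x^*$ over $F \cap \de(S)$ must already be $1/2$. So the existence of a workable flip set $F$ presupposes exactly the half-integrality you are trying to establish. You gesture at the ``alignment fact'' as the source of such an $F$, but nothing in the alignment observation produces a bipartition of edges with the required balance on every cut---you would need an independent combinatorial argument for that, and it is not supplied. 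Your fallback TU/interval-matrix route is also too vague: the constraint matrix here is \emph{not} an interval matrix as written, and you do not say what transformation makes it one.

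The paper's proof takes a different and more concrete route. It uses a token-counting argument: place two tokens per edge, one at each endnode, and for each $S \in \LL$ count the tokens in $S$ but in no child of $S$. Using $0 < x^* < 1$, linear independence, and the no-edge-between-siblings hypothesis, one shows each such count is at least $2$; combined with $|\LL| \geq |E|$ (uniqueness) and the global bound $\sum_S t(S) \le 2|E|$, equality is forced everywhere. This simultaneously gives the ``furthermore'' clause (a stray endnode would lose a token) and, crucially, shows that after subtracting children's equations from each parent's equation, the resulting equivalent system has \emph{exactly two variables per equation} with $\pm 1$ coefficients. Half-integrality then follows from the known fact (Hochbaum--Megiddo--Naor--Tamir) that integral systems with two variables per inequality have half-integral solutions. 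The key idea you are missing is this row-reduction step and the counting that pins down ``two variables per equation''; without it, neither your flip nor your TU sketch gets off the ground.
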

\begin{proof}
For every $uv \in E$ put one token at $u$ and one token at $v$.
The total number of tokens is $2|E|$.
For $S \in \LL$ let $t(S)$ be the number of tokens placed at nodes in $S$ that belong to no child of $S$.
Since $\LL$ is laminar, every token is placed in at most one set in $\LL$, and thus $\sum_{S \in \LL} t(S) \leq 2|E|$. 
Let $S \in \LL$ and let $\CC(S)$ be the set of children of $S$ in $\LL$.
Let $E_S$ be the set of edges in $\de(S)$ that cover no child of $S$,
and $E_{\CC(S)}$ the set of edges not in $\de(S)$ that cover some child of $S$.
Note that no $e \in E_{\CC(S)}$ connects two distinct children of $S$. Observe that
$$
x^*(E_S)- x^*(E_{\CC(S)})=x^*(\de(S))-\sum_{C \in \CC(S)} x^*(\de(C))=b_S-\sum_{C \in \CC(A)} b_C \equiv b'_S \ .
$$
Thus $x^*(E_S)- x^*(E_{\CC(S)})$ is an integer.
We cannot have $|E_S|=|E_{\CC(S)}|=0$ by linear independence, and we cannot have 
$|E_S|+|E_{\CC(S)}|=1$ by the assumption $0<x<1$.
Thus $|E_S|+|E_{\CC(S)}| \geq 2$.
Since no $e \in E$ goes between children of $S$, $t(S) \geq |E_S|+|E_{\CC(S)}|$.
Consequently, since $\sum_{S \in \LL} t(S) \leq 2|E|$, we get:
$t(S) = |E_S|+|E_{\CC(S)}| = 2 \ \forall S \in \LL$.
Moreover, if an endnode of some $e \in E$ belongs to no $S \in \LL$,
then we get the contradiction $\sum_{S \in \LL} t(S) \geq 2|E|+1$.
Now we replace our equation system by an equivalent one $\left\{x(E_S)-x(E_{\CC(S)})=b'_S:S \in \LL\right\}$
obtained by elementary operations on the rows of the coefficients matrix.
Note that $x^*$ is also a unique solution to this new equation system.
Moreover, this equation system has exactly two variables in each equation and all its coefficients are integral.
By \cite{HMNT}, the solution of such equation systems is always half-integral.
\qed
\end{proof}

Let us say that {\TA} instance is {\bf spider-shaped} if every in-edge in $E$ is an up-edge.
By a standard ``iterative rounding'' argument (c.f. \cite{LRS}), and using the correspondence between 
rooted trees and laminar families, we get from Lemma~\ref{l:half}:


\begin{corollary} \label{c:half}
Suppose that we are given a spider-shaped {\TA} instance and $b \in \mathbb{N}_0^F$.
Let $x$ be an extreme point of the polytope 
$\{x \in \mathbb{R}^E:x(\psi(f)) \geq b_f \ \forall f \in F, 0 \leq x \leq 1\}$. 
Then $x$ is half-integral (namely, $x_e \in \{0,\frac{1}{2},1\}$ for all $e \in E$)
and $x_e \in \{0,1\}$ for every $e \in \de(r)$.
\end{corollary}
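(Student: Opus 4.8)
The plan is to pass to the language of laminar set families, where the cut constraints become degree-type constraints, run a standard iterative-rounding reduction, and then invoke Lemma~\ref{l:half}; the spider-shaped hypothesis will turn out to be exactly what guarantees the structural condition required by that lemma.

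First I would rephrase the polytope. Recall that {\TA} is equivalent to covering the laminar family $\LL$ of node sets of complete rooted proper subtrees of $T$, where an edge covers $S$ if it has exactly one endnode in $S$; under this correspondence the constraint $x(\psi(f)) \ge b_f$ becomes $x(\de(S)) \ge b_S$, where $S$ is the node set of the complete rooted subtree whose parent $T$-edge is $f$. A first easy remark, which will give the $\de(r)$ part of the statement, is that $r$ lies in no member of $\LL$: a complete rooted subtree containing $r$ would have to be rooted at a (proper) ancestor of $r$, but $r$ has none, so such a subtree is all of $T$, which is not proper.

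The key structural step is the claim that, in a spider-shaped instance, for every $S \in \LL$ there is no edge of $E$ between two distinct children of $S$ in $\LL$ — and, more generally, the same holds in any laminar subfamily $\LL' \subseteq \LL$. Indeed, distinct children $C_1,C_2$ of $S$ (rooted at $s$) in such a family are node sets of complete rooted subtrees rooted at proper descendants $v_1,v_2$ of $s$ that are incomparable (if $v_1$ were an ancestor of $v_2$ the sets would be nested, contradicting maximality). If $uv \in E$ had $u \in C_1$ and $v \in C_2$, then $T_{uv}\subseteq S$, so $T_{uv}$ avoids $r$ and $uv$ is an in-edge; but $u$ and $v$ lie in subtrees rooted at incomparable nodes, so neither is an ancestor of the other, hence $uv$ is not an up-edge, contradicting that the instance is spider-shaped.

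Then I would apply iterative rounding. Let $E_f=\{e: 0<x_e<1\}$; if $E_f=\empt$ then $x$ is integral and we are done. Otherwise, since $x$ is an extreme point, the tight constraints among $\{x_e=0\}$, $\{x_e=1\}$, and $\{x(\de(S))=b_S:S\in\LL\}$ span $\mathbb{R}^E$; restricting to the coordinates in $E_f$, the vector $x|_{E_f}$ is the unique solution of $\{x(\de_{E_f}(S))=b'_S:S\in\LL'\}$ for a linearly independent subfamily $\LL'$ of tight sets, where $b'_S=b_S-|\{e\in\de(S):x_e=1\}|$ lies in $\mathbb{N}_0$ (it is integral since $b$ is, and non-negative since $S$ is tight and $x\ge 0$). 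As $\LL'\subseteq\LL$ is laminar, $0<x|_{E_f}<1$, and by the previous paragraph no edge of $E_f$ joins two distinct children of any $S\in\LL'$, Lemma~\ref{l:half} applies to the graph $(V,E_f)$, the family $\LL'$, and $b'$: it yields $x_e=1/2$ for all $e\in E_f$, and that every endnode of every $e\in E_f$ belongs to some $S\in\LL'$. Since $r$ belongs to no member of $\LL$, no edge of $\de(r)$ can lie in $E_f$, so $x_e\in\{0,1\}$ for all $e\in\de(r)$; combined with $x_e\in\{0,\frac12,1\}$ on $E_f$ and $x_e\in\{0,1\}$ off it, this proves the corollary. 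The only point requiring care is the iterative-rounding bookkeeping of the fourth paragraph — extracting a laminar independent basis $\LL'$ of the tight sets with integral nonnegative right-hand sides and $x|_{E_f}$ as the unique solution, and checking that the "no edge between children" property survives both the passage from $\LL$ to $\LL'$ and from $E$ to $E_f$ — but this is entirely routine once the structural observation about spider-shaped instances is in place, which is the real content.
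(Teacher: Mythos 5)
Your proof is correct and follows essentially the same route as the paper's: pass to the laminar family of complete rooted proper subtrees, reduce to the fractional support via (iterative rounding / induction on removed $\{0,1\}$-edges), observe that the spider-shape hypothesis forbids edges between distinct children of any tight set, and invoke Lemma~\ref{l:half}. The only differences are presentational — you argue in a single pass over the fractional support rather than by explicit induction — and you are a bit more explicit than the paper in deriving the $\de(r)$ conclusion from the "each endnode lies in some $S\in\LL$" clause of Lemma~\ref{l:half} together with the fact that $r$ belongs to no proper rooted subtree.
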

\begin{proof}
Note that the assumptions of the lemma remain valid after each one of the following two operations:
\begin{itemize}
\item
If $x_e=0$ for some $e \in E$ then remove $e$.
\item
If $x_e=1$ for some $e \in E$ then set $b_f \gets \max\{b_f-1,0\}$ for every $f \in T_e$, and remove $e$.
\end{itemize}
We thus can argue by induction. The base case $|E|=1$ is obvious.
If there is $e \in E$ with $x_e \in \{0,1\}$ then apply an appropriate operation above and the induction hypothesis.
Otherwise, $0 < x <1$.
Let $\LL_T$ be the laminar family of $T$.
Since $x$ is an extreme point, there exists a laminar family $\LL \subs \LL_T$
such that $x$ is the unique solution to the equation system $\{x(\de(S))=b_S:S \in \LL\}$.
Note that since our {\TA} instance is spider-shaped, there is no edge that connects two children of some $S \in \LL$. 
Thus Lemma~\ref{l:half} implies $x^*_e=1/2$ for all $e \in E$, and the proof is complete.
\qed
\end{proof}

The algorithm that computes an integral solution 
of cost $\frac{4}{3}(2C^{\sf in}+C^{\sf cr}+C^r)$ is as follows.
We obtain a spider-shaped instance by removing all non-up in-edges and 
compute an optimal extreme point solution $x$ to the {\CLP}. 
By Corollary~\ref{c:half}, $x$ is half-integral and $x_e \in \{0,1\}$ for every $e \in \de(r)$.
We take into our solution every edge $e$ with $x_e=1$
and round the remaining $1/2$ entries using the algorithm of
Cheriyan, Jord\'{a}n \& Ravi \cite{CJR}, that showed how to round 
a half-integral solution to the {\sc Cut-LP} to integral solution within a factor of $4/3$.
Thus we can compute a solution $J$ of cost at most 
$c(J) \leq \frac{4}{3}c^\T x \leq \frac{4}{3}c^\T x^*$.
We claim that $c^\T x \leq 2C^{\sf in}+C^{\sf cr}+C^r$.
To see this let $E^{\sf in}$ be the set of in-edges and let $x'$ be the  $E^{\sf in}$-up vector of $x^*$.
Then $x'$ is a feasible  solution to the {\CLP} of value $2C^{\sf in}+C^{\sf cr}+C^r$,
in the obtained {\TA} instance with all non-up in-edges removed.
But since $x$ is an optimal solution to the same LP, we have 
$c^\T x \leq c^\T x' = 2C^{\sf in}+C^{\sf cr}+C^r$.
This concludes the proof of Lemma~\ref{l:round2} for the case of arbitrary costs.

In the rest of this section we consider the case of unit costs. 

\begin{lemma} \label{l:cycles}
Let $a,b \geq 0$ and let $x$ be an extreme point of the polytope
$$
\Pi=\{x \in \Pi^{Cut}:C_x^{\sf in}=a,C_x^{\sf cr}=b\} 
$$
such that $x_e >0$ for every cross-edge $e$.  
Then the graph $(V,E^{\sf cr})$ of cross-edges has no even cycle 
and each one of its connected components has at most one cycle.
\end{lemma}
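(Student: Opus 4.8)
The plan is to argue by a counting/token argument on the extreme point $x$, exploiting that $\Pi$ is $\Pi^{Cut}$ intersected with two extra equalities $C_x^{\sf in}=a$ and $C_x^{\sf cr}=b$. First I would observe that since $x$ is an extreme point of $\Pi$ and $x_e>0$ for every cross-edge $e$, there is a set of $|E|$ tight linearly independent constraints defining $x$; these consist of the two cost equalities (for in-edges and cross-edges), some bound constraints $x_e=0$ (for non-cross-edges; note no cross-edge can be tight at $0$ by hypothesis) or $x_e=1$, and some cut constraints $x(\psi(f))=1$. After deleting the zero-valued edges and, as in Corollary~\ref{c:half}, handling the edges with $x_e=1$ by decrementing the relevant $b_f$ and removing them, we may assume $0<x<1$ on the surviving edge set, which still contains all cross-edges. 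Then $|E|$ equals the number of tight cut constraints plus at most two (the cost equalities), so the laminar family $\LL$ of tight cuts satisfies $|\LL|\ge |E|-2$.

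Next I would run the token argument of Lemma~\ref{l:half}: place two tokens per edge, and for each $S\in\LL$ the quantity $t(S)=|E_S|+|E_{\CC(S)}|$ (edges in $\de(S)$ not covering a child, plus edges not in $\de(S)$ covering a child). Exactly as there, linear independence and $0<x<1$ force $t(S)\ge 2$ for every $S\in\LL$, but now the total token budget is $2|E|$ and there are at least $|E|-2$ sets, so the total excess $\sum_{S\in\LL}(t(S)-2)$ plus the number of tokens lying in no set of $\LL$ is at most $4$. Crucially, for the family of complete rooted subtrees, an edge $uv$ contributes a token to a set $S$ precisely when exactly one of $u,v$ lies in $S$ and neither lies in a child of $S$; for a \emph{cross-edge} both endpoints are ``deep'' in $T$ but on opposite sides of the root, so a cross-edge tends to contribute its tokens efficiently. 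The point is to translate ``an even cycle'' or ``two independent cycles'' in $(V,E^{\sf cr})$ into a configuration that forces the slack $\sum(t(S)-2)$ to exceed $4$, or alternatively forces some $S$ with no edge between its children to violate linear independence.

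The cleaner route, which I would actually pursue, is: restrict attention to $(V,E^{\sf cr})$ and show it has $|E^{\sf cr}|\le |V^{\sf cr}|$ edges on its vertex set plus the extra structure, so the cyclomatic number of each component is at most $1$; the two cost equalities are exactly the ``$+2$'' of slack that is \emph{already consumed}, so no further slack remains to support a second independent cycle or an even cycle. For the parity claim: an even cycle $C$ in cross-edges gives a nonzero $\{+1,-1\}$ alternating vector in the kernel of the tight cut-incidence matrix restricted to $E^{\sf cr}$ that also lies in the kernel of both cost rows (since along an even cycle the signed sum of any fixed cost is zero), contradicting that $x$ is the unique solution — this is where I expect the \textbf{main obstacle}: one must verify that an alternating $\pm1$ assignment on an even cross-cycle is annihilated by every tight cut row $x(\de(S))=1$, which holds because any complete rooted subtree $S$ meets a cycle of cross-edges in an even number of cut-edges with the alternation matching, but bookkeeping the case where $r\in S$ versus $r\notin S$ and where the cycle passes through $r$'s two sides needs care. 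Finally, ``at most one cycle per component'' follows because two cycles in one component would yield two independent kernel vectors after accounting for the $\le 2$-dimensional cost space, again contradicting uniqueness.

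Once the parity and cyclomatic statements are in hand, the lemma is immediate. I would present the kernel/linear-algebra version rather than the token version, since it handles both conclusions uniformly: the solution space of $\{x(\de(S))=b_S:S\in\LL\}$ together with the two cost equalities is a single point, so the cross-edge incidence vectors of any collection of cycles, which live in the kernel of the cut rows, must be killed by the (at most two) cost rows; an odd cycle survives to contribute one dimension per component (the allowed single cycle), an even cycle would contribute an extra unkillable dimension, and a second cycle in a component likewise — both impossible.
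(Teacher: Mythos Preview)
Your even-cycle argument is essentially the paper's, just rephrased: the paper directly builds the alternating $\pm\eps$ vector $z$ on an even cross-cycle and checks that $x\pm z\in\Pi$, which is exactly your ``kernel vector'' observation. The verification that $z(\psi(f))=0$ for every $f$ works because $S_f$ lies entirely in one branch of $r$ while every cross-edge joins two different branches, so $S_f\cap V(Q)$ is an \emph{independent set} in $Q$; each of its vertices then contributes one $Q'$-edge and one $Q''$-edge to $\de(S_f)$. You flag this as the main obstacle but do not isolate this independence property, which is the actual reason the alternation cancels.

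Where your plan genuinely breaks is the ``at most one cycle per component'' part. Your dimension count is backwards. For vectors $z$ supported on cross-edges, the condition $z(\psi(f))=0$ for all $f$ is equivalent to $\sum_{e\ni v}z_e=0$ for every $v\neq r$, i.e.\ $z$ lies in the kernel of the \emph{unsigned} vertex--edge incidence matrix of $(V,E^{\sf cr})$. For a single odd cycle that kernel is $\{0\}$, not one-dimensional; for two odd cycles joined by a path it is \emph{one}-dimensional, not two. So ``two cycles $\Rightarrow$ two independent kernel vectors'' is false, and the argument that the two cost rows cannot kill them collapses. (Moreover, the $C^{\sf in}$ row is identically zero on cross-edges, so only one cost row is in play anyway.)

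The paper handles this case by writing down the unique (up to scaling) kernel vector for the dumbbell configuration explicitly (its Figure~\ref{f:two-cycles}) and observing that it too satisfies $C^{\sf cr}_z=0$ under unit costs, so $x\pm z\in\Pi$ and extremality fails. You need that concrete vector (or an equivalent argument that the one-dimensional kernel automatically has coordinate sum zero), not a dimension count. The paper also inserts an intermediate step you omit: once even cycles are excluded, any path in $E^{\sf cr}\setminus Q$ between two vertices of an odd cycle $Q$ would create an even cycle, so distinct odd cycles are vertex-disjoint; only then does the dumbbell perturbation apply.
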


\begin{proof}
Let $Q$ be a cycle in $E^{\sf cr}$ and let $\eps=\min_{e \in Q} x_e$.
Since $x_e>0$ for all $e \in E^{\sf cr}$, $\eps>0$. If $|Q|$ is even,  
let $Q',Q''$ be a partition of $Q$ into two perfect matchings. 
Let $z$ be a vector defined by $z_e=\eps$ if $e \in Q'$, $z_e=-\eps$ if $e \in Q''$, and $z_e=0$ otherwise.
By the choice of $\eps$, $x+z,x-z$ are non-negative, and it is not hard to verify that $x+z,x-z \in \Pi$.
However, $x=\frac{1}{2}(x+z)+\frac{1}{2}(x-z)$, contradicting that $x$ is an  extreme point.

Suppose that $|Q|$ is odd.
Let $u,v$ be nodes on $Q$, possibly $u=v$.
We claim that $(V,E^{\sf cr} \sem Q)$ has no $uv$-path;
this also implies that any two odd cycles in $(V,E^{\sf cr})$ are node disjoint.
Suppose to the contrary that $(V,E^{\sf cr} \sem Q)$ has a $uv$-path $P$.
Let $P'$ and $P''$ be the two internally disjoint $uv$-paths in $Q$
where $|P'|$ is odd and $|P''|$ is even.
Then one of $P \cup P'$ and $P \cup P''$ is an even cycle, 
contradicting that $(V,E^{\sf cr})$ has no even cycles.

\begin{figure} 
\centering
\includegraphics[width=12cm]{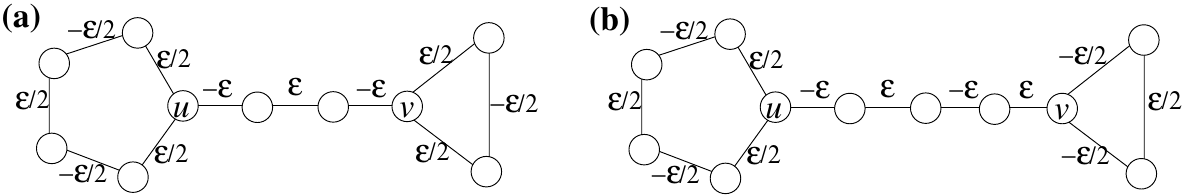}
\caption{Illustration to the proof that no two cycles in $(V,E^{\sf cr})$ are connected by a path.}
\label{f:two-cycles}
\end{figure}

Finally, we show that no two cycles in $(V,E^{\sf cr})$ are connected by a path.
Suppose to the contrary that $(V,E^{\sf cr})$ has a $uv$-path $P$ that connects two distinct cycles 
$Q_u$ and $Q_v$, see Fig.~\ref{f:two-cycles}.
Let $z$ be defined as in Fig.~\ref{f:two-cycles}.
By the choice of $\eps$, each one of the vectors $x+z$ and $x-z$ is non-negative,
and they are both in $\Pi$.
However, $x=\frac{1}{2}(x+z)+\frac{1}{2}(x-z)$, contradicting that $x$ is an  extreme point.
\qed
\end{proof}

Note that Lemma~\ref{l:cycles} implies that extreme points of  
$\Pi^{Cut}$ have the property given in the lemma.
From Lemma~\ref{l:cycles} we also get:

\begin{corollary} \label{c:bound}
In the case of unit costs there exists a polynomial time algorithm that computes
$x \in \Pi$ such that  the graph $(V,E^{\sf cr})$ of cross edges of positive $x$-value is a forest
and such that and $C_x^{\sf in}=C^{\sf in}$, $C_x^r=C^r$, and $C_x^{\sf cr} \leq \frac{4}{3}C^{\sf cr}$.
\end{corollary}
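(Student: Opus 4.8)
The plan is to start from a cut-LP solution with the prescribed coordinate sums, pass to a carefully chosen extreme point whose positive-value cross-edges almost form a forest, and then surgically break the few remaining cycles without disturbing $C^{\sf in}$ and $C^r$ and at only a $4/3$ blow-up of $C^{\sf cr}$. Concretely: fix $x^*\in\Pi^{Cut}$, form the polytope $\Pi^+:=\{x\in\Pi^{Cut}:C_x^{\sf in}=C^{\sf in},\ C_x^{\sf cr}=C^{\sf cr},\ C_x^r=C^r\}$ (nonempty, since $x^*\in\Pi^+$), and compute an extreme point $x$ of $\Pi^+$ in polynomial time. The first point is that Lemma~\ref{l:cycles} applies to $x$: if $x$ had an even cycle, or two odd cycles joined by a path, in its positive-value cross-edge graph, then the perturbation $z$ constructed in the proof of Lemma~\ref{l:cycles} would satisfy $x\pm z\in\Pi$, and since $z$ is supported on cross-edges --- which are disjoint from the $r$-edges --- it also leaves $C_x^r$ unchanged, so $x\pm z\in\Pi^+$, contradicting extremality. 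Hence every connected component of the graph of positive-value cross-edges of $x$ is a tree, or a tree plus one edge closing an odd cycle, and the odd cycles are pairwise vertex-disjoint.

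The next step is to break each such odd cycle $Q$, say on $2t+1$ vertices. I would pick an edge $e^*\in Q$ attaining $\eps:=\min_{e\in Q}x_e$, let $P:=Q\sem\{e^*\}$ be the resulting path on the same $2t+1$ vertices, and let $M\subs E(P)$ be a minimum-size edge cover of $P$, so that $|M|=t+1$. Then define $x'$ by $x'_{e^*}:=0$, $x'_e:=x_e+\eps$ for $e\in M$, $x'_e:=x_e-\eps$ for the remaining $t-1$ edges of $Q$, and $x'_e:=x_e$ elsewhere; since the odd cycles are vertex-disjoint these replacements have disjoint supports and can be carried out simultaneously. Nonnegativity of $x'$ is immediate from $\eps=\min_{e\in Q}x_e$, and $C_{x'}^{\sf in}=C^{\sf in}$, $C_{x'}^r=C^r$ because only cross-edges are moved. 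On $Q$ the cross-cost changes by $-\eps+(t+1)\eps-(t-1)\eps=\eps\le\frac13\sum_{e\in Q}x_e$, so summing over the edge-disjoint cycles (and using $\sum_Q\sum_{e\in Q}x_e\le C_x^{\sf cr}=C^{\sf cr}$) gives $C_{x'}^{\sf cr}\le\frac43 C^{\sf cr}$; and setting $x_{e^*}$ to $0$ destroys the unique cycle of its component while adding no new support edge, so the positive-value cross-edges of $x'$ form a forest.

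The step I expect to be the real obstacle is showing $x'\in\Pi^{Cut}$, and this is exactly where choosing $M$ to be an edge cover is used. The constraint indexed by a $T$-edge $f$ is $x(\de(S))\ge 1$ for the node set $S$ of the complete rooted subtree below $f$, and $S$ is contained in a single branch of the root $r$; since any cross-edge crosses branches, a cross-edge lies in $\de(S)$ precisely when it has an endnode in $S$. Hence the edges of $Q$ that appear in $\psi(f)$ are exactly those incident to $W:=V(Q)\cap S$, and since consecutive vertices of $Q$ lie in distinct branches while all of $W$ lies in one branch, $W$ is an independent set of the cycle $Q$; in particular these are $2|W|$ distinct edges. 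As $M$ covers every vertex of $P$, hence every vertex of $Q$, each vertex of $W$ is incident to at least one edge of $M$, so at least $|W|$ of those $2|W|$ edges lie in $M$ (and change by $+\eps$) and at most $|W|$ lie in $Q\sem M$ (and change by $0$ or $-\eps$); therefore $x'(\psi(f))\ge x(\psi(f))\ge 1$. With feasibility in hand, the resulting $x'$ satisfies all the required properties, and the remaining ingredients --- the LP for the extreme point of $\Pi^+$, decomposing the cross-edge graph into components and locating their cycles, and computing minimum edge covers of paths --- are all elementary and run in polynomial time.
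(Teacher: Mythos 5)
Your proof is correct and follows essentially the same route as the paper: take an extreme point with the coordinate sums fixed, invoke Lemma~\ref{l:cycles} to see that the support graph of cross-edges is a disjoint union of trees and vertex-disjoint odd cycles, and break each odd cycle $Q$ with cost increase $\min_{e\in Q}x_e\le\frac13\sum_{e\in Q}x_e$. Where you diverge is the cycle-breaking device: you zero out a minimum edge $e^*$ and redistribute along a minimum edge cover of the path $Q\sem\{e^*\}$, whereas the paper simply adds $x_{e^*}$ to the two cycle-neighbours of $e^*$ and sets $x_{e^*}$ to $0$ --- same net cost increase $\eps$, less machinery. Notably, the feasibility verification you supply (identifying $\psi(f)\cap Q$ with the edges of $Q$ incident to the set $W=V(Q)\cap S$, and observing that $W$ is independent in $Q$ since consecutive cycle vertices lie in distinct branches of $r$) is precisely the detail the paper waves off with ``it is not hard to verify,'' and it already suffices for the paper's simpler two-neighbour move: if $e^*=uv$ loses $\eps$ from $x(\psi(f))$ because, say, $u\in S$, then the neighbouring cycle edge at $u$ is also in $\psi(f)$ and gains $\eps$, so $x(\psi(f))$ does not drop. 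So the edge-cover construction is correct but overshoots; the $W$-independence observation was the right insight and pays for the simpler redistribution directly. A further cosmetic difference: you fix $C_x^{\sf cr}=C^{\sf cr}$ inside the polytope, while the paper fixes only $C_x^{\sf in}$ and $C_x^{r}$ and minimises $\sum_e x_e$ to get $C_x^{\sf cr}\le C^{\sf cr}$; both choices are fine.
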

\begin{proof}
Let $\Pi$ be as in Lemma~\ref{l:cycles} where $a=C^{\sf in}$ and $b=C^r$ and let 
$x$ be an optimal extreme point solution to the LP 
$\min\{\sum_{e \in E}x_e:x \in \Pi\}$.
Let $Q$ be a cycle of cross-edges and $e$ the minimum $x$-value edge in $Q$. 
We update $x$ by adding $x_e$ to each of $x_{e'},x_{e''}$ and setting $x_e=0$.
The increase in the value of $x$ is at most $\frac{1}{3}\sum_{e \in Q}x_e$, and 
it is not hard to verify that $x$ remains a feasible solution.
In this way we can eliminate all cycles, ending with $x \in \Pi$ as required. 
\qed
\end{proof}

\noindent
{\bf Remark.}
Corollary~\ref{c:bound} holds also for arbitrary costs, but in this case the proof
is much more involved. Specifically, we use the following statement, which we do not prove here, 
since it currently has no application: \\
{\em Let $q \geq 3$ and let $c_i,x_i \geq 0$ be reals, $i=0, \ldots,q-1$.
Denote $a_i=c_{i-1}-c_i+c_{i+1}$ where the indices are modulo $k$. Then 
$\sum_{i=0}^{k-1} c_ix_i \geq 3 \cdot \min_{0 \leq i \leq k-1} a_ix_i$.} \\

Let $x$ be as in Corollary~\ref{c:bound} and let $x'$ be an $E^{\sf in }$-up vector of $x$. 
Note that $x' \in \Pi^{Cut}$, since $x \in \Pi^{Cut}$.
We will show how to compute a solution $J$ of size $c(J) \leq x'(E) \leq 2C^{\sf in}+\frac{4}{3}C^{\sf cr}+C^r$. 
While there exists  a pair of edges $e=uv$ and $e'=u'v'$ such that $x'_e,x'_{e'}>0$ 
and $T_{u'v'} \subset T_{uv}$ we do $x'_e \gets x'_e+x'_{e'}$ and $x'_{e'} \gets 0$.
Then $x'$ remains a feasible solution to the {\CLP} without changing the value 
(since we are in the case of unit costs).
Hence we may assume that there is no such pair of edges.
Let $E'$ be the support of $x'$.
If every leaf of $T$ has some cross-edge in $E'$ incident to it, 
then by the assumption above there are no up-edges.
In this case, since $E'$ is a forest, $x_e \geq 1$ for every $e \in E'$ and $E'$ is a solution as required.

Otherwise, there is a leaf $v$ of $T$ such that no cross-edge in $E'$ is incident to $v$.
Then there is a unique up-edge $e$ incident to $v$, and $x'_e \geq 1$.
We take such $e$ into our partial solution, updating $x'$ and $E'$ accordingly.  
Note that some cross-edges may become $r$-edges, but no up-edge can become a cross-edge,
and the set of cross-edges remains a forest.
Applying this as long as such leaf $v$ exists,
we arrive at the previous case, where adding $E'$ to the partial solution
gives a solution as required. This concludes the proof of Lemma~\ref{l:round2}.

\subsection{Comparison to the results of Fiorini et al.} \label{ss:FGKS}

We need some definitions to compare Corollary~\ref{c:half} to a result of
Fiorini, Gro\ss, K\"{o}nemann, and Sanit\'{a} \cite{FGKS}, that
showed that spider-shaped {\TA} instances can be solved in polynomial time.
Consider the polyhedron $\Pi(b)=\{x:Ax \geq b, x \geq 0\}$ where $A$ is a given integral matrix,
and let $\Pi_I(b)$ be convex hull of the integral points in $\Pi(b)$. 
The {\bf $\{0,\frac{1}{2}\}$-Chv\'{a}tal-Gomory cuts} (see \cite{Gom,Chv,CF}) are inequalities of the form  
$(\la^\T A +\mu^\T) x \geq \lceil \la^ \T b \rceil$, 
for vectors $\la,\mu$ with entries in $\{0,\frac{1}{2}\}$ such that $\la^\T A+\mu^\T$ is an integral vector.

A matrix $A$ is {\bf $2$-regular} if each of its non-singular square submatrices is half-integral.
It is known that $A$ is $2$-regular if and only if the extreme points of $\Pi(b)$ are half-integral for any integral 
vector $b$, and that if $A$ is $2$-regular then 
$P_I(b)$ is described by the $\{0,1/2\}$-Chv\'{a}tal-Gomory cuts \cite{AK04}.
Thus in matrix terms our Corollary~\ref{c:half} implies the following:

\begin{corollary} \label{c:2-regular}
In spider-shaped {\TA} instances, 
the incidence matrix $A$ of the $T$-edges and the paths $\{T_e:e \in E\}$ is $2$-regular.
\end{corollary}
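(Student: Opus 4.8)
The plan is to derive this purely as a translation of Corollary~\ref{c:half} into matrix language, using the known equivalence: a matrix $A$ is $2$-regular if and only if for every integral vector $b$, all extreme points of $\Pi(b)=\{x:Ax\geq b,x\geq 0\}$ are half-integral. So the whole argument reduces to checking that the polyhedron appearing in this equivalence, with $A$ the incidence matrix of $T$-edges versus paths $\{T_e:e\in E\}$, is exactly (a face of) the polytope in Corollary~\ref{c:half}, and that the half-integrality conclusion there is precisely what $2$-regularity demands.

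First I would fix the setup: for a spider-shaped {\TA} instance, index the rows of $A$ by the $T$-edges $f\in F$ and the columns by the edges $e\in E$, with $A_{fe}=1$ iff $f\in T_e$, i.e. iff $e\in\psi(f)$. Then $Ax\geq b$ is exactly the system $x(\psi(f))\geq b_f$ for all $f\in F$. Given any integral $b$, I want to show every extreme point $x$ of $\Pi(b)=\{x:Ax\geq b,\,x\geq 0\}$ is half-integral. The mild technical point is that Corollary~\ref{c:half} is stated for the \emph{bounded} polytope with the extra constraints $x\leq 1$, whereas $2$-regularity concerns $\Pi(b)$ without upper bounds. I would handle this the standard way: if $x$ is an extreme point of $\Pi(b)$, then $x$ has finitely many nonzero coordinates and is an extreme point of $\Pi(b)\cap\{x\leq N\}$ for $N$ large enough (any $N>\max_e x_e$, e.g.\ $N$ an integer exceeding $\|x\|_\infty$), and this truncated polytope is affinely a face/coordinate-restriction of the polytope in Corollary~\ref{c:half} after rescaling the upper bound (or one simply re-derives it: the proof of Lemma~\ref{l:half} via \cite{HMNT} only used laminarity of the tight sets and the absence of edges between siblings, both of which survive verbatim for arbitrary integral $b$ and arbitrary finite upper bounds). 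So $x$ is half-integral.

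Then I would invoke the cited fact (\cite{AK04}) that half-integrality of all extreme points of $\Pi(b)$ for all integral $b$ is \emph{equivalent} to $A$ being $2$-regular, i.e.\ to every nonsingular square submatrix of $A$ being half-integral; this gives the corollary immediately. The only real content is making sure the spider-shaped hypothesis is what forces the ``no edge between two distinct children of $S$'' condition in Lemma~\ref{l:half}: an in-edge with both endpoints strictly inside a complete rooted subtree $S$ but in different child-subtrees would violate it, and spider-shapedness rules exactly this out since every in-edge is an up-edge, hence its two endpoints lie on a single root-to-leaf chain and so cannot straddle two children of any $S$. Cross-edges and $r$-edges never have both endpoints inside a proper complete rooted subtree at all, so they are irrelevant to the sibling condition.

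The main obstacle — really the only subtlety — is the bounded-vs-unbounded discrepancy between Corollary~\ref{c:half} and the definition of $2$-regularity, i.e.\ making the ``truncate at a large integer $N$'' step airtight without re-proving Lemma~\ref{l:half}. I expect this to be a one-line observation (an extreme point of $\Pi(b)$ is an extreme point of its intersection with a large box, and the half-integrality proof is scale-free in the upper bound), but it is the place where a careless statement would leave a gap. Everything else is bookkeeping: transcribing $x(\psi(f))\geq b_f$ as $Ax\geq b$, noting that the incidence matrix of $T$-edges and tree-paths is exactly $A$, and quoting the equivalence theorem.
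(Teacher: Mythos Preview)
Your proposal is correct and matches the paper's approach: the paper presents Corollary~\ref{c:2-regular} without a separate proof, simply stating ``Thus in matrix terms our Corollary~\ref{c:half} implies the following,'' relying on the equivalence (cited from \cite{AK04}) between $2$-regularity of $A$ and half-integrality of all extreme points of $\{x:Ax\geq b,\,x\geq 0\}$ for every integral $b$. Your write-up is in fact more careful than the paper, since you explicitly flag and resolve the mismatch between the bounded polytope in Corollary~\ref{c:half} and the unbounded one in the definition of $2$-regularity; one clean way to close that gap (slightly cleaner than truncating at a large $N$) is to take any nonsingular square submatrix $A'$, set $y^*=(A')^{-1}b' \bmod 1$, observe that $A'y^*$ is integral and $0\leq y^*<1$, so $y^*$ is an extreme point of the bounded polytope and Corollary~\ref{c:half} gives half-integrality of $y^*$, hence of $(A')^{-1}b'$.
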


Note that $2$-regularity of $A$ does not imply that the corresponding 
integer program $\min\{c^\T x:x \in \Pi_I(b)\}$ is in P, 
since we have no guarantee that the separation problem for $\{0,1/2\}$-Chv\'{a}tal-Gomory cuts is in P.
However, a particular class of $2$-regular matrices has this nice property. 
A matrix $A$ is a {\bf binet matrix} if there exists a square non-singular integer matrix $R$
such that ${\|z\|}_1 \leq 2$ for any column $z$ of $R$ or of $RA$,
where ${\|z\|}_1=\sum_i|z_i|$ is the $L^1$-norm of $z$.
It is known that any binet matrix is $2$-regular, 
but binet matrices have the advantage that 
the separation problem for $\{0,1/2\}$-Chv\'{a}tal-Gomory cuts is in P \cite{AKKP}.
All in all, we have that if $A$ is binet then the integer program
$\min\{c^\T x: x \in \Pi_I(b)\}$ can be solved efficiently,
by a combinatorial algorithm \cite{AKKP}.
The following result, that is stronger than our Corollary~\ref{c:2-regular},
was proved by Fiorini, Gro\ss, K\"{o}nemann \& Sanit\'{a} \cite{FGKS} in parallel to our work;
for completeness of exposition we provide a proof-sketch.

\begin{lemma}[\cite{FGKS}] \label{l:FGKS}
In spider-shaped {\TA} instances, 
the incidence matrix $A$ of the $T$-edges and the paths $\{T_e:e \in E\}$ is~binet.
\end{lemma}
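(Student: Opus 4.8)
The plan is to exhibit explicitly a square non-singular integer matrix $R$ such that every column of $R$ and every column of $RA$ has $L^1$-norm at most $2$, where $A$ is the incidence matrix of the $T$-edges versus the paths $\{T_e : e \in E\}$ in a spider-shaped instance. The natural candidate for $R$ is the incidence matrix of the $T$-edges with themselves after fixing a spanning structure of $T$: since $T$ is a tree rooted at $r$, index the rows of $A$ by the $T$-edges $f \in F$, each identified with the complete rooted subtree $S_f$ it hangs below, and let $R$ be the lower-triangular ``ancestor'' matrix whose $(f,g)$ entry records whether $g$ lies on the root path of $f$ (or its analogue restricted to the parent-edge relation). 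Equivalently, $R$ is the transfer matrix from the $T$-edge basis to the ``leaf-path'' basis; it is unimodular, hence non-singular and integral, and each of its columns has at most two nonzero entries because in a tree each $T$-edge $g$ is the parent edge of at most one other $T$-edge that we keep track of — this is exactly where the tree structure is used to bound $\|z\|_1 \le 2$.

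The heart of the argument is then the computation of $RA$ and the verification that its columns also have $L^1$-norm $\le 2$. A column of $A$ corresponds to an edge $e = uv \in E$ and is the indicator of the path $T_{uv}$. Applying $R$ takes a consecutive path in $T$ to its ``boundary'' in the leaf-path coordinates, so the column of $RA$ indexed by $e$ should have nonzero entries only at the two endpoints (or the two extreme $T$-edges) of $T_{uv}$, with values in $\{+1,-1\}$; thus $\|z\|_1 = 2$ — unless the path degenerates. Here is where spider-shapedness is essential: every in-edge is an up-edge, so for an in-edge $T_{uv}$ is a ``vertical'' segment whose image under $R$ telescopes cleanly to a difference of two basis vectors; for a cross-edge, $r$ is internal to $T_{uv}$, and the path splits into two vertical segments meeting at $r$, each again telescoping, and one must check that the root contribution cancels or contributes at most the allowed norm. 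The case analysis over the four edge types (cross-edge, in-edge, $r$-edge, up-edge) defined earlier, combined with the telescoping identity for $R$ acting on a monotone path, is the main computation.

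I expect the principal obstacle to be getting the definition of $R$ exactly right so that \emph{both} $R$ and $RA$ satisfy the norm bound simultaneously: there is tension between making $R$ ``large enough'' to triangulate $A$ and keeping its own columns sparse, and the spider-shaped hypothesis is what makes a single choice work for all edge types at once. A secondary subtlety is handling parallel edges and the degenerate case $r = u$ or $r = v$ ($r$-edges), where one of the two telescoped endpoints collapses; these should only make columns sparser, not denser, so they are harmless but must be mentioned. Once $R$ is pinned down, non-singularity follows from lower-triangularity with unit diagonal, and the norm bounds follow from the per-column combinatorial count, completing the proof that $A$ is binet and hence — via \cite{AKKP} — that spider-shaped {\TA} instances are solvable in polynomial time by a combinatorial algorithm.
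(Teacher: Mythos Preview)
Your overall strategy---exhibit an explicit $R$, verify non-singularity, and check the $L^1$ bound on columns of $R$ and $RA$---is exactly the paper's. But the matrix you propose does not work, and the gap is precisely the one you flagged as the ``principal obstacle''.

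The full ancestor matrix (with $R_{f,g}=1$ whenever $g$ lies on the root path of $f$) has column $g$ supported on \emph{all} $T$-edges below $g$, so $\|z\|_1$ can be as large as $|F|$; it is not binet-admissible for $R$. Your justification that ``each $T$-edge $g$ is the parent edge of at most one other $T$-edge'' is false in general---a $T$-edge may have arbitrarily many child $T$-edges---so even the parenthetical ``parent-edge'' variant is not yet right. The matrix that works is the one the paper uses: $R_{f,f}=1$, $R_{f,g}=-1$ if $g\in\ch(f)$, and $0$ otherwise. Here \emph{column} $g$ has the diagonal $+1$ and at most one $-1$, located at the (unique) parent of $g$; this is the correct reason $\|z\|_1\le 2$. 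Non-singularity then follows by your triangularity argument once $T$-edges are ordered with children before parents.

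With this $R$, the column of $RA$ indexed by $e=uv$ satisfies $(RA)_{f,e}=|\{f\}\cap T_e|-|\ch(f)\cap T_e|$, which is your ``telescoping''. For an up-edge you get one $+1$ (at the leaf-end parent edge) and at most one $-1$ (at the parent edge of the top endpoint). For a non-up edge with least common ancestor $a$ you get two $+1$'s and a single entry $-2$ at the parent edge of $a$, giving $\|z\|_1=4$; it is exactly the spider-shaped hypothesis that kills this case, since every non-up edge is then a cross-edge with $a=r$, which has no parent $T$-edge. Once you replace your $R$ by this one and swap ``at most one child'' for ``at most one parent'', the rest of your outline goes through as in the paper.
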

\begin{proof}
For $f \in F$ let $\ch(f)$ denote the set of child $T$-edges of $f$ in $T$.
Define a square matrix $R \in \{-1,0,1\}^{F \times F}$ as follows:
$R_{f,f}=1$,
$R_{f,g}=-1$ if $g \in \ch(f)$, 
and the other entries of $R$ are $0$.
Let $z$ be the column in $R$ of $g \in F$. 
Then $z_g=1$ and if $g$ has a parent $T$-edge $f$ then $z_f=-1$; other entries of $z$ are~$0$.
Thus ${\|z\|}_1 \leq 2$.
We prove by induction on $|F|$ that $R$ is non-singular. 
The case $|F|=1$ is trivial. If $|F| \geq 2$, let $f$ be a leaf $T$-edge. 
The row of $f$ in $R$ has a unique non-zero entry $R_{f,f}=1$. 
Let $T'$ be obtained from $T$ by removing $f$ and the leaf of $f$.
The matrix $R'$ that corresponds to $T'$ is obtained from $R$ by removing the row of $f$
and the column of $f$. By the induction hypothesis,  $\det(R') \neq 0$.
Thus $|\det(R)|=|\det(R')| \neq 0$, implying that $R$ is non-singular.

We now describe the entries of the matrix $RA$. Let $y$ be the row in $R$ of $f \in F$. 
Then $y_f=1$ and $y_g=-1$ for $g \in \ch(f)$; other entries of $y$ are $0$.
Column $e$ in $A$ encodes the path $T_e$, namely, has $1$ for each $T_e$-edge; other entries are $0$.  
Thus 
$$
(RA)_{f,e}=|f \cap T_e|-|\ch(f) \cap T_e| \ .
$$
In particular, if $z$ is the column in $RA$ of $e \in E$ then: 
\begin{itemize}
\item
If $f \in T_e$ then $z_f=1$ if $|\ch(f) \cap T_e|=0$ and $z_f=0$ otherwise.
\item
If $f \notin T_e$ then $z_f=-|\ch(f) \cap T_e|$. 
\end{itemize}
Now let $e=uv$ and let $a$ be the least common ancestor of $u,v$.
Consider two cases, in which we indicate only non-zero entries of $z$.
If $a \in \{u,v\}$ ($e$ is an up-edge), say $a=v$, then
$z_f=1$ if $f$ is the parent $T$-edge of $u$ and $z_f=-1$ if $a \neq r$ and $f$ is the parent $T$-edge of $v$.
If $a \notin \{u,v\}$ then
$z_f=1$ if $f$ is the parent $T$-edge of $u$ or of $v$,
and $z_f=-2$ if $f$ is the parent $T$-edge of $a$;
however, in a spider-shaped {\TA} instance we cannot have $z_f=-2$, 
since if $e$ is a cross edge then $a=r$ and thus $a$ has no parent $T$-edge.
Consequently, in both cases ${\|z\|}_1 \leq 2$.
\qed
\end{proof}

By a result of \cite{AKKP} (an integer program $\min\{c^\T x: x \in \Pi_I(b)\}$ is in P if $A$ is binet), 
Lemma~\ref{l:FGKS} immediately implies:

\begin{corollary} [\cite{FGKS}] \label{c:FGKS}
Spider-shaped {\TA} instances admit a polynomial  time algorithm.
\end{corollary}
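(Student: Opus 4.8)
The plan is to derive Corollary~\ref{c:FGKS} as an immediate consequence of Lemma~\ref{l:FGKS} together with the already-cited result of \cite{AKKP} that the integer program $\min\{c^\T x : x \in \Pi_I(b)\}$ admits a (combinatorial) polynomial time algorithm whenever the constraint matrix $A$ is binet. So the only real content is to package a spider-shaped {\TA} instance as such an integer program.

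First I would recall that {\TA} is exactly the covering problem $\min\{c^\T x : Ax \geq \mathbf{1}, x \in \{0,1\}^E\}$, where $A$ is the incidence matrix of the $T$-edges against the paths $\{T_e : e \in E\}$ — this identification was already made in the paragraph preceding Corollary~\ref{c:half}, and covering a $T$-edge $f$ is the same as the cut constraint $x(\psi(f)) \geq 1$. Taking $b = \mathbf{1}$ (the all-ones vector indexed by $F$), the feasible integral points are precisely the covers of $T$, and an optimal vertex of $\Pi_I(b)$ is a minimum cost such cover, i.e.\ an optimal {\TA} solution; the $x \geq 0$ and the implicit $x \leq 1$ bounds cause no loss since in a covering problem one never benefits from taking an edge more than once. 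Then Lemma~\ref{l:FGKS} says that for a spider-shaped instance this matrix $A$ is binet, so the hypothesis of the \cite{AKKP} result is met, and running that algorithm solves the instance in polynomial time. That is the whole argument.

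Since there is essentially no obstacle here — the statement is literally ``binet $\Rightarrow$ poly-time'' applied to Lemma~\ref{l:FGKS} — the proof can be kept to two sentences, and that is presumably why the excerpt supplies no proof at all. The only point deserving a word of care is that the \cite{AKKP} algorithm is stated for $\min\{c^\T x : x \in \Pi_I(b)\}$ with $b$ an arbitrary integral vector, and here we only need $b = \mathbf{1}$, which is certainly admissible; one should also note that the binet certificate $R$ produced in the proof of Lemma~\ref{l:FGKS} is explicit and computable in polynomial time from $T$, so feeding it to the \cite{AKKP} procedure is legitimate.

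\begin{proof}
Recall that {\TA} is the integer program $\min\{c^\T x : Ax \geq \mathbf{1},\ x \in \{0,1\}^F\}$ in which $A$ is the incidence matrix of the $T$-edges against the paths $\{T_e : e \in E\}$; a feasible integral point is exactly a cover of $T$, so an optimal vertex of $\Pi_I(\mathbf{1})$ is an optimal {\TA} solution. For a spider-shaped instance Lemma~\ref{l:FGKS} asserts that $A$ is binet, and the matrix $R$ certifying this is explicitly computable from $T$ in polynomial time. Hence the result of \cite{AKKP}, that $\min\{c^\T x : x \in \Pi_I(b)\}$ can be solved in polynomial time by a combinatorial algorithm when $A$ is binet, applies with $b = \mathbf{1}$, which yields a polynomial time algorithm for spider-shaped {\TA}.
\qed
\end{proof}
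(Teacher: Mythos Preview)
Your proof is correct and follows exactly the paper's approach: the paper derives the corollary in one line from Lemma~\ref{l:FGKS} together with the \cite{AKKP} result that $\min\{c^\T x : x \in \Pi_I(b)\}$ is in P when $A$ is binet. One small typo: in your formal proof you write $x \in \{0,1\}^F$ where it should be $\{0,1\}^E$, since the decision variables are indexed by the edges in $E$, not the $T$-edges in $F$.
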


In \cite{FGKS} it is also provided a direct simple proof that the problem of 
separating the $\{0,1/2\}$-Chv\'{a}tal-Gomory cuts of the {\CLP} is in P.
Combining this with our Corollary~\ref{c:2-regular} and a result of \cite{AK04} 
($P_I(b)$ is described by the $\{0,1/2\}$-Chv\'{a}tal-Gomory cuts if $A$ is $2$-regular),
also enables to deduce Corollary~\ref{c:FGKS}.

Fiorini et. al. \cite{FGKS} considered the {\sc Odd-Cut $k$-Bundle LP} obtained by adding
to the $k$-{\sc Bundle LP} of \cite{A} the ``odd-cuts'' (the $\{0,1/2\}$-Chv\'{a}tal-Gomory cuts).
They showed that this LP is compatible with the decomposition of \cite{A},
namely, that if $x$ a feasible solution to this LP, then for any $k$-bundle $B$ 
the restriction of $x$ to $\psi(B)$ is a feasible solution to this LP on $B$.
Since each $k$-branch is a $k$-bundle, 
the more compact {\sc Odd-Cut $k$-Branch LP} is also compatible with \cite{A} decomposition.
As was mentioned in the Introduction, combining the \cite{FGKS} and our paper techniques gives: \\
{\em For any $1 \leq \la \leq k-1$, {\TA} admits a $4^k \cdot poly(n)$ time algorithm that 
computes a solution of cost at most 
$\frac{3}{2}+\frac{2\la M}{k-\la M}+\frac{2}{\la}$ times the optimal value of the 
{\sc Odd-Cut $k$-Branch LP}}. \\
Let us briefly describe the modifications needed for this combined result.
\begin{itemize}
\item
Lemma~\ref{l:thick} is used in the same way as before, namely, just to cover 
by cost $\frac{2}{\la}c^\T x$ the $\la$-thick edges uncovered by the main algorithm.
\item
Recall that \cite{FGKS} showed that separating the odd-cuts 
is in P.
The new Lemma~\ref{l:round1} would state that given $x^* \in \mathbb{R}^E$,
there exists a $4^k \cdot poly(n)$ time algorithm that 
either finds a $k$-branch inequality or an odd-cut inequality violated by $x^*$, 
or returns an integral solution of cost at most $C^{\sf in}+2C^{\sf cr}+C^r$.
\item
Lemma~\ref{l:round2} will be replaced by a result of \cite{FGKS} that a solution 
of cost $2C^{\sf in}+C^{\sf cr}+C^r$ can be computed in polynomial time.
\item
In an improved version of Corollary~\ref{c:main} one gets that if no violated inequality 
is found then $c(J_S) \leq \sum_{e \in \ga(S)}c_ex_e+\sum_{e \in \psi(f)}c_ex_e$.
And then, the same calculations as after Algorithm~\ref{alg:rounding} give
$\frac{c(J_S)}{\Delta(c^\T x)} \leq \frac{3}{2}+\frac{2\la M}{k-\la M}$. 
\end{itemize}

Let us now illustrate another application of Corollary~\ref{c:FGKS}.

\begin{lemma} \label{l:diam}
{\TA} admits ratio $3/2$ for trees of diameter $\leq 5$.
\end{lemma}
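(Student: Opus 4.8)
The plan is to reduce a diameter-$\le 5$ instance to spider-shaped subinstances and then invoke Corollary~\ref{c:FGKS} to solve those exactly, paying a $3/2$ factor only for the gluing. Root $T$ at a center of the (bi)central $T$-edge or node, so that every node is within distance $2$ or $3$ of the root; concretely, for a diameter-$5$ tree there is a central $T$-edge $e_0=r_1r_2$ whose removal splits $T$ into two subtrees $T_1,T_2$, each of depth $\le 2$ when rooted at $r_1$, resp.\ $r_2$. The key structural observation I would establish is that, after rooting $T_i$ at $r_i$, an in-edge of $T_i$ with both endnodes at depth $2$ in $T_i$ can always be replaced, at no extra cost, by two up-edges (using the metric-completion assumption $c_{uv}=\tau(T_{uv})$): if $e=uv$ with $u,v$ leaves having parents $p_u\neq p_v$ (both children of $r_i$), then $T_{uv}$ passes through $r_i$, so $\{ur_i, vr_i\}$ covers $T_{uv}$ and has cost $\le c_e$; and $\{ur_i,vr_i\}$ consists of up-edges. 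Hence each $T_i$ may be taken spider-shaped.

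\medskip
\noindent
\textbf{First} I would handle the central $T$-edge $e_0$ by a guessing step, exactly as in the diameter-$3$ case mentioned in the Introduction: in an optimal solution some edge $e^*$ covers $e_0$, and there are only $O(n^2)$ choices for $e^*$, so I fix $e^*$, contract the endpoints of $e_0$, and charge $c_{e^*}$ to the optimum. (After contracting $e_0$ the remaining instance has diameter $\le 4$ and is rooted at the merged node $r$; its two sides $T_1,T_2$ hang off $r$ and both have depth $\le 2$.) \textbf{Second}, having argued each side is spider-shaped, I would not merely solve each side independently, since an optimal solution may use cross-edges through $r$ linking a leaf of $T_1$ to a leaf of $T_2$; instead I would observe that the whole contracted instance, with non-up in-edges discarded on both sides, is itself spider-shaped with root $r$ (an in-edge of the whole tree lies entirely in one $T_i$, and we just showed it can be made an up-edge; a cross-edge is an $r$-edge here), so Corollary~\ref{c:FGKS} solves the entire contracted instance \emph{exactly} in polynomial time. \textbf{Third}, I would compare the cost of this spider-shaped optimum to the true optimum $\mathrm{opt}$: writing $J^*$ for an optimal solution of the original instance, I split $J^*$ into its up-edges, cross-edges, and the guessed $e^*$, and replace each non-up in-edge $uv$ of $J^*$ on side $i$ by the two up-edges $ur_i,vr_i$; this at most doubles the cost of those particular edges but leaves up-edges and cross-edges untouched, giving a feasible spider-shaped solution of cost at most $2\,(\text{in-cost of }J^*)+(\text{cross-cost})+c_{e^*}$. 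Combined with the trivial bound that simply taking all of $J^*$ is feasible, averaging (or rather taking the better of "solve the spider-shaped instance" and a second, symmetric reduction) yields cost $\le \tfrac32\,\mathrm{opt}$.

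\medskip
\noindent
\textbf{The main obstacle} is making the $3/2$ bound tight rather than just getting $2$: a crude "make everything an up-edge" reduction only gives ratio $2$, since in-edges can cost half as much as their two-up-edge replacement. To get $3/2$ I expect to need a second reduction that is biased the other way — one that keeps in-edges intact but pays a factor $3/2$ on cross-edges, e.g.\ via the half-integrality/$4/3$-rounding machinery of Corollary~\ref{c:bound} and Lemma~\ref{l:round2} specialized to depth $\le 2$, where cross-edge forests are stars — and then return the cheaper of the two solutions. The arithmetic to verify that $\min\{\,C^{\sf in}+\tfrac32 C^{\sf cr}+\ldots,\ 2C^{\sf in}+C^{\sf cr}+\ldots\,\}\le \tfrac32(C^{\sf in}+C^{\sf cr})+\ldots$ is the routine part; the delicate point is checking that both reductions preserve feasibility on a diameter-$\le 5$ tree after the contraction of $e_0$, and that the guessed edge $e^*$ is accounted for consistently (it contributes to $C^r$ after contraction, which both bounds already absorb at unit rate). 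I would also need the easy remark that $\diam(T)\le 5$ with $T$ rooted at a centroid forces depth $\le 3$, and after contracting the central edge, depth $\le 2$, so that "leaves" and "depth-$2$ in-edges" are the only cases to analyze.
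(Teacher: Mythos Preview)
Your overall architecture---guess an edge covering the central $T$-edge to reduce to diameter $\le 4$, then take the better of two procedures, one cheap on in-edges and one cheap on cross-edges---matches the paper. The spider-shaped half (your Procedure~2) is also the paper's: after the reduction, remove all non-up in-edges, solve the resulting spider-shaped instance exactly via Corollary~\ref{c:FGKS}, and bound its optimum by $2C^{\sf in}+C^{\sf cr}$ using the $E^{\sf in}$-up vector of an optimal solution. (A small correction: your ``at no extra cost'' replacement is wrong. The only non-up in-edges in a depth-$2$ tree are between \emph{siblings} $u,v$ with the \emph{same} parent $p$; the case $p_u\neq p_v$ you analyze is a cross-edge, not an in-edge. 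For siblings, $\{up,vp\}$ covers $T_{uv}$ but one only has $c_{up}+c_{vp}\le 2c_{uv}$, not $\le c_{uv}$. That is exactly why the spider-shaped bound is $2C^{\sf in}+C^{\sf cr}$, which you do state correctly later.)

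The genuine gap is your ``second reduction.'' You never specify a procedure that achieves $C^{\sf in}+\alpha C^{\sf cr}$ with $\alpha<2$, and the tools you point to (Corollary~\ref{c:bound}, the $4/3$ rounding in Lemma~\ref{l:round2}) either apply only to unit costs or give $\frac{4}{3}(2C^{\sf in}+C^{\sf cr})$, which does not keep in-edges ``intact.'' The paper's missing ingredient is much simpler and purely combinatorial: in the diameter-$4$ tree rooted at the center $r$, every branch hanging on $r$ has diameter $\le 3$, and {\TA} on a diameter-$\le 3$ tree is solvable \emph{exactly} in polynomial time (star plus one guessed edge). Taking the union $\bigcup_B J_B$ of optimal branch covers yields a feasible solution of cost at most $C^{\sf in}+2C^{\sf cr}$, since each in-edge lies in one branch and each cross-edge in two. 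Then
\[
\min\{C^{\sf in}+2C^{\sf cr},\ 2C^{\sf in}+C^{\sf cr}\}\ \le\ \tfrac{3}{2}\,(C^{\sf in}+C^{\sf cr}),
\]
with equality when $C^{\sf in}=C^{\sf cr}$. This replaces your speculative half-integrality step and closes the argument.
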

\begin{proof}
The case $\diam(T)=5$ is reduced to the case $\diam(T) \leq 4$ by ``guessing'' some optimal solution edge that covers the central $T$-edge. 
So assume that $\diam(T) \leq 4$. Let $r$ be a center of $T$. Let $r$ be a center of $T$. 
Fix some optimal solution and let $C^{\sf in}$ and $C^{\sf cr}$ denote 
the fractional cost of in-edges and cross-edges in this solution.
As before, apply the following two procedures.
\begin{enumerate}
\item
Each branch $B$ hanging on $r$ is a tree of diameter $\leq 3$, 
hence an optimal cover $J_B$ of $B$ can be computed in polynomial time.
The union of the edge sets $J_B$ gives a solution of cost at most $C^{\sf in}+2C^{\sf cr}$. 
\item
Compute an optimal solution of the spider-shaped instance obtained 
by removing all non-up in-edges using Lemma~\ref{l:FGKS}; the cost of this solution is $2C^{\sf in}+C^{\sf cr}$.
\end{enumerate}
Choosing the better among the two computed solutions gives a solution of cost at most
$\min\{C^{\sf in}+2C^{\sf cr},2C^{\sf in}+C^{\sf cr}\}$, while the optimal solution cost is $C^{\sf in}+C^{\sf cr}$.
It is easy to see that the approximation ratio is bounded by $3/2$; 
if $C^{\sf in} \leq C^{\sf cr}$ then $C^{\sf in}+2C^{\sf cr} \leq \frac{3}{2}(C^{\sf in}+C^{\sf cr})$,
while if $C^{\sf in} > C^{\sf cr}$ then $2C^{\sf in}+C^{\sf cr} < \frac{3}{2}(C^{\sf in}+C^{\sf cr})$.
\qed
\end{proof}

Lemma~\ref{l:diam} can be used further to obtain ratio $9/5$ for trees of diameter $\leq 7$.
As before, we can reduce the case $\diam(T)=7$ to the case $\diam(T) \leq 6$ by guessing some optimal solution 
edge that covers the central $T$-edge. We compute a $3/2$-approximate cover of each branch, which gives a solution 
of cost at most $\frac{3}{2}(C^{\sf in}+2C^{\sf cr})$. We also compute a solution of cost at most $2C^{\sf in}+C^{\sf cr}$
as before, using Corollary~\ref{c:FGKS}. The worse case is when these two bounds are equal, namely, when 
$C^{\sf in}=4C^{\sf cr}$. In this case we get that 
$\frac{2C^{\sf in}+C^{\sf cr}}{C^{\sf in}+C^{\sf cr}} = 1+ \frac{C^{\sf in}}{C^{\sf in}+C^{\sf cr}}=1+\frac{4}{4+1}=\frac{9}{5}$.
In a similar way, one can further obtain ratio better than $2$ when $\diam(T) \leq 9$, and so on,
but the ratio approaches $2$ when the diameter becomes higher. \\

We note that the effort in proving Lemma~\ref{l:FGKS} of \cite{FGKS} and our 
Corollary~\ref{c:2-regular} is roughly the same. 
However, the result in Lemma~\ref{l:FGKS} of \cite{FGKS} is more general
and thus enables to obtain easily results for related problems, as we illustrate below.
Note however that our result not only substantially simplifies and reduces the time complexity of algorithms based on
the approach of Adjiashvili \cite{A}, but also qualitatively extends the range of costs for which a ratio better than $2$ can be achieved. 
Moreover, the proof idea of Corollary~\ref{c:half} 
might be useful for half-integral network design problems for which the corresponding matrix is not binet.

It is known that if $A$ is binet then also the problem
of minimizing $c^\T x$ over $\{x \in \Pi_I(b):p \leq x \leq q\}$ can be solved in
polynomial time for any integer vectors $p$ and $q$ \cite{AK04,AKKP}.
Now consider the following generalization
of {\TA}, which we call the {\sc Generalized Tree Augmentation} problem. 
Here we are also given demands $\{b_f:f \in F\}$ on the $T$-edges,
and require that at least $b_f$ edges will cover every $T$-edge $f \in F$;
we also require that for every edge $e \in E$ at most $q$ copies of $e$ are selected. 
Then from Lemma~\ref{l:FGKS} of \cite{FGKS} and \cite{AK04,AKKP} one can deduce the following (the proof is omitted):

\begin{corollary} \label{c:FGKS2}
Spider-shaped {\sc Generalized Tree Augmentation} instan\-ces admit a polynomial  time algorithm.
\end{corollary}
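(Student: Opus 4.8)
\noindent\textbf{Proof plan for Corollary~\ref{c:FGKS2}.}
The plan is to cast {\sc Generalized Tree Augmentation} on a spider-shaped instance as a box-constrained integer program whose constraint matrix is exactly the binet matrix supplied by Lemma~\ref{l:FGKS}, and then invoke the combinatorial optimization results for binet matrices. First I would set up the formulation: introduce a variable $x_e \in \mathbb{N}_0$ for every $e \in E$ recording how many copies of $e$ are chosen, so that the requirement ``at least $b_f$ edges cover $f$'' becomes $\sum_{e \in \psi(f)} x_e \ge b_f$ for all $f \in F$, the restriction ``at most $q$ copies of $e$ are selected'' becomes $x_e \le q$, and the objective is $c^\T x$. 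Writing $A$ for the incidence matrix of the $T$-edges and the paths $\{T_e:e \in E\}$ --- the very matrix appearing in Corollary~\ref{c:2-regular} and Lemma~\ref{l:FGKS} --- the problem is precisely
\[
\min\left\{c^\T x \ :\ Ax \ge b,\ \mathbf{0} \le x \le q\cdot\mathbf{1},\ x \in \mathbb{Z}^E\right\} \ .
\]
Note that, unlike the proof of Lemma~\ref{l:main}, no metric-completion or degree-$2$ shortcutting preprocessing is used here, so the spider-shaped structure of the given instance is retained verbatim and $A$ is literally the matrix to which Lemma~\ref{l:FGKS} applies.

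Next I would invoke Lemma~\ref{l:FGKS}: since the instance is spider-shaped, $A$ is binet. Then, by the result of \cite{AK04,AKKP} quoted earlier --- that for a binet matrix $A$ one can, in polynomial time, minimize $c^\T x$ over $\{x \in \Pi_I(b):p \le x \le q\}$ for any integer vectors $p,q$ --- we can solve the displayed integer program in polynomial time, taking $p=\mathbf{0}$ and the upper bound vector to be the all-$q$ vector. Since the integral points of $\{x \in \Pi_I(b):\mathbf{0}\le x\le q\cdot\mathbf{1}\}$ are exactly the integral points of $\{x:Ax \ge b,\ \mathbf{0}\le x\le q\cdot\mathbf{1}\}$, i.e. exactly the feasible vectors of our integer program, the returned vector is an optimal {\sc Generalized Tree Augmentation} solution, and expanding $x$ back into a multiset of edges completes the argument.

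The only point requiring care --- and the step I would treat as the main obstacle --- is checking that the cited binet machinery of \cite{AK04,AKKP} genuinely yields an \emph{integral} optimum of the box-constrained problem (rather than merely optimizing a linear functional over the possibly non-integral set $\Pi_I(b)\cap[\mathbf{0},q\cdot\mathbf{1}]$), and that adjoining the box rows to $A$ does not destroy the binet property that their algorithm relies on. Both are purely a matter of quoting the appropriate form of the results in \cite{AK04,AKKP}; there is no new polyhedral or combinatorial difficulty beyond Lemma~\ref{l:FGKS}, which is presumably why the paper omits the proof.
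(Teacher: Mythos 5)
Your proposal is correct and follows exactly the route the paper intends: the paragraph preceding Corollary~\ref{c:FGKS2} already sets up the binet box-constrained optimization result of \cite{AK04,AKKP}, and the paper omits the proof precisely because it consists of writing the {\sc Generalized Tree Augmentation} IP with matrix $A$ and applying that result via Lemma~\ref{l:FGKS}. The concern you flag at the end is the right one to check, and it is indeed discharged by the cited form of the result (polynomial-time optimization over $\{x \in \Pi_I(b): p \le x \le q\}$ for binet $A$), so there is no gap.
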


\section{Bound on the integrality gap of the {\CLP} (Theorem~\ref{t:gap})} \label{s:gap}

Let us write the (unit costs) {\CLP} as well as its dual LP explicitly:
\[
\begin{array}{lllllllll} \ \ \ \ \ \ \ 
& \min             & \di \   \sum_{e \in E} x_e                      &  \ \ \ \ \ \ \ \ \ \ \ \ \ \ \ \ \ \ \ \                   
& \max            & \di \   \sum_{f \in F} y_f                        &                         \\
& \mbox{s.t.}  & \di     \sum_{e \in \psi(f)} x_e  \geq 1  & \forall f \in F 
& \mbox{s.t.}  & \di     \sum_{\psi(f) \ni e}y_f \leq 1      & \forall e \in E   \\
&                      & \di \ \ x_e \geq 0                                   & \forall e \in E 
&                      & \di \ \ y_f \geq 0                                    & \forall f \in F  
\end{array}
\]

To prove that the integrality gap of the {\CLP}  is at most $28/15$
we will show that a simplified version from \cite{KN16} of 
the algorithm of \cite{EFKN-TALG} has the desired performance.
For the analysis, we will use the dual fitting method. 
We will show how to construct a (possibly infeasible) dual solution $y \in \mathbb{R}^F_+$, 
that has the following two properties: \\
\noindent
{\bf Property~1.} \ 
$y$ fully pays for the constructed solution $J$, namely, $|J| \leq \sum_{f \in F} y_f$. \\
{\bf Property~2.} \ 
$y$ may violate the dual constraints by a factor of at most  $\rho=28/15$.  \\
\noindent
From the second property we get that $y/\rho$ is a feasible dual solution,
hence by weak duality the value of $y$ is at most $\rho$ times the optimal value of the {\CLP}.
Combining with the first property we get that $|J|$ is at most $\rho$ times the optimal value of the {\CLP}.


The algorithm 
iteratively finds a pair $T',J'$ where 
$T'$ is a subtree of the current tree and $J'$ covers $T'$, contracts $T'$, and adds $J'$ to $J$.
We refer to nodes created by contractions as {\bf compound nodes} and denote by $C$ 
the set of non-leaf compound nodes of the current tree. 
Non-compound nodes are referred to as {\bf original nodes}. 
For technical reasons, the root $r$ is considered as a compound node. 
Whenever $T'$ contains the root of $T$, 
the new compound node becomes the root of the new tree.

To identify a pair $T',J'$ as above, the algorithm maintains a matching $M$ on the original leaves.
We denote by $U$ the leaves of the current tree unmatched by $M$. 
A subtree $T'$ of $T$ is {\bf $M$-compatible} if for any $bb' \in M$
either both $b,b'$ belong to $T'$ or none of $b, b'$ belongs to $T'$;
in this case we will also say that a contraction of $T'$ is $M$-compatible.
Assuming all compound nodes were created by $M$-compatible contractions, 
then the following type of contractions is also $M$-compatible.

\begin{definition} [greedy contraction]
Adding to the partial solution $J$ an edge $e$ with both endnodes in $U$ and contracting $T_e$ 
is called a {\bf greedy contraction}.
\end{definition}

Given a complete rooted $M$-compatible subtree $T'$ of $T$ we use the notation:
\begin{itemize}
\item
$M' = M(T')$ is the set of edges in $M$ with both endnodes in $T'$.
\item
$U'= U(T')$ is the set of unmatched leaves of $T'$.
\item
$C'= C(T')$ is the set of non-leaf compound nodes of $T'$.
\end{itemize}

\begin{definition} [semi-closed tree]
Let $T'$ be a complete rooted subtree of $T$.
For a subset $A$ of nodes of $T'$ we say that $T'$ of is {\bf $A$-closed} if 
there is no edge from $A$ to a node outside $T'$, and $T'$ is {\bf $A$-open} otherwise.
Given a matching $M$ on the leaves of $T$, we say that $T'$ is {\bf semi-closed} if it is $M$-compatible and $U'$-closed. 
\end{definition}

The following definition characterizes semi-closed subtrees that we want to avoid.
We will say that $T'$ with $3$ leaves is of type~(i) if it has two nodes with exactly two children each 
(see the node $w$ and its parent in Fig.~\ref{f:def}(i)) 
and $T'$ is of type~(ii) otherwise (see Fig.~\ref{f:def}(ii)).

\begin{figure} \centering
\includegraphics{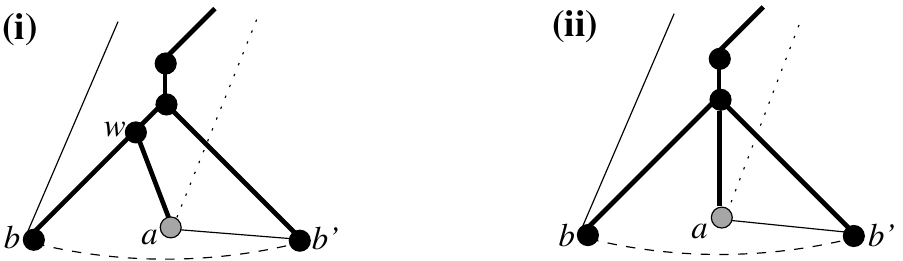}
\caption{Dangerous trees. Here and in subsequent figures $T$-edges are shown by bold lines, 
edges in $M$ by dashed lines, some other existing edges by thin solid lines, 
and edges that cannot exist by dotted lines. 
Nodes that must be original are shown by 
black circles, while nodes that may be compound nodes are shown by gray circles.
Some of the edges may be paths, possibly of length~$0$.
A dangerous tree of type~(i) has two nodes with exactly $2$ children each, and 
contracting the path between these two nodes results in a dangerous tree of type~(ii).}
\label{f:def}
\end{figure}

\begin{definition}[dangerous semi-closed tree]
A semi-closed subtree $T'$ of $T$ is {\bf dangerous} if it is as in Fig.~\ref{f:def}. 
Namely, $|M'|=1$, $|U'|=1$, $|C'|=0$, and if $a$ is the leaf of $T'$ unmatched by $M$ then:
$T'$ is $a$-closed and there exists an ordering $b,b'$ of the matched leaves of $T'$
such that $ab' \in E$, the contraction of $ab'$ does not create
a new leaf, and $T'$ is $b$-open. 
\end{definition}

\begin{definition}[twin-edge, stem] 
Let $L$ denote the set of leaves of $T$.
An edge on $L$ is a {\bf twin-edge} if its contraction results in a new leaf.
The least common ancestor of the endnodes of a twin-edge is a {\bf stem}.
\end{definition} 

In \cite{EFKN-TALG} the following is proved: 

\begin{lemma}[\cite{EFKN-TALG}] \label{l:B}
Suppose that $M$ has no twin-edges and that the current tree $T$ was obtained from the initial tree by 
sequentially applying a greedy contraction or a semi-closed tree contraction, and that $T$ has no greedy contraction. 
Then there exists a polynomial time algorithm that finds a non-dangerous semi-closed subtree $T'$ of $T$ 
and a cover $J'$ of $T'$ of size $|J'|=|M'|+|U'|$. 
\end{lemma}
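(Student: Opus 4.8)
The statement of Lemma~\ref{l:B} is cited from \cite{EFKN-TALG}, so at the level of a proof proposal I would reconstruct the argument rather than invent a new one. The goal is to produce, in a tree $T$ that admits no greedy contraction and was built only by greedy and semi-closed contractions (and whose matching $M$ has no twin-edges), a non-dangerous semi-closed subtree $T'$ together with a cover $J'$ of size exactly $|M'|+|U'|$. The plan is to locate a suitable $T'$ by descending in $T$, starting from the root and repeatedly moving into a branch that is still $U$-open, until we reach an inclusionwise minimal complete rooted subtree that is $U'$-closed; such a subtree exists because leaves themselves are (trivially or after matching) candidates, and minimality will severely constrain its structure. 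I would then do a case analysis on the combinatorial type of this minimal semi-closed $T'$ --- essentially on $(|M'|,|U'|,|C'|)$ and the shape of $T'$ --- and in each case either exhibit a cover of size $|M'|+|U'|$ directly, or, if the only semi-closed subtree reachable this way is dangerous, perform a local modification (re-rooting the descent into a sibling branch, or enlarging $T'$ by one level) that replaces it with a non-dangerous one without destroying semi-closedness.

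\textbf{Key steps, in order.} First, formalize the descent: define a ``lowest'' complete rooted $M$-compatible subtree $T'$ that is $U'$-closed, and argue it exists and can be found in polynomial time by checking, for each node bottom-up, whether the subtree hanging on it sends an edge from an unmatched leaf to the outside. Second, catalogue the possible structures of an inclusionwise minimal semi-closed $T'$: because $T$ has no greedy contraction, $T'$ cannot contain two unmatched leaves joined by an edge, and because $M$ has no twin-edges, the matched pairs inside $T'$ are constrained; one shows $|U'| \le 1$ in the minimal case or a very restricted configuration of a few leaves. Third, in each surviving case build $J'$: for a single unmatched leaf $a$ that is $a$-closed, the edge from $a$ forced by $x(\de(a))\ge 1$ (feasibility/coverage) together with the matching edge(s) gives a cover of the claimed size; for the configurations with a couple of matched pairs, use the matching edges plus one extra edge, checking the count $|M'|+|U'|$ is met. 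Fourth --- and this is where ``non-dangerous'' enters --- if the minimal $T'$ obtained is dangerous (exactly the configuration $|M'|=1,|U'|=1,|C'|=0$ with the asymmetric open/closed behavior of the two matched leaves $b,b'$ in Fig.~\ref{f:def}), exploit the $b$-open side: follow an edge out of $b$ to escape $T'$, showing that a slightly larger or sibling subtree is semi-closed and not dangerous, and recompute $J'$ there; the twin-edge-free hypothesis is exactly what guarantees this escape does not merely recreate a dangerous tree one level up.

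\textbf{Main obstacle.} The delicate part is the last step: showing that whenever the natural minimal semi-closed subtree is dangerous one can always ``repair'' it, and that the repair terminates. Dangerous trees are defined precisely so that the obvious cover would cost one more than $|M'|+|U'|$ (the unmatched leaf $a$ and both matched leaves each seem to need their own edge), so avoiding them is not cosmetic --- it is what makes the size bound $|M'|+|U'|$ attainable. The argument has to use all three hypotheses simultaneously: no greedy contraction rules out cheap covers via two unmatched leaves; no twin-edge in $M$ controls the local shape near stems and prevents an infinite regress of dangerous trees; and the inductive assumption that every prior contraction was $M$-compatible and semi-closed is what keeps compound nodes from appearing in forbidden positions inside $T'$. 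I expect the bulk of the real work to be a careful, figure-driven verification that the finitely many shapes of a minimal semi-closed tree are either directly coverable at cost $|M'|+|U'|$ or dangerous-and-repairable, with the repair strictly decreasing some potential (e.g. the distance from the root to $T'$, or $|U|$), hence terminating in polynomial time.
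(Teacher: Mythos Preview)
The paper does not give a proof of Lemma~\ref{l:B}; it is quoted verbatim as a result of \cite{EFKN-TALG} (``In \cite{EFKN-TALG} the following is proved: \ldots'') and used as a black box inside Algorithms~\ref{alg:plain} and~\ref{alg:duals}. So there is no ``paper's own proof'' to compare your proposal against.

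That said, a couple of remarks on your sketch. First, the lemma is purely combinatorial: its hypotheses speak only about $T$, $M$, and the contraction history, not about any LP solution. Your appeal to ``the edge from $a$ forced by $x(\de(a))\ge 1$'' is out of place here --- the existence of covering edges must come from the structural assumptions (semi-closedness, no greedy contraction, no twin-edges in $M$), not from feasibility of some $x$. Second, your claim that minimality forces $|U'|\le 1$ is too strong: the algorithm (Case~1, Fig.~\ref{f:easy}(a)) explicitly handles minimal semi-closed trees with $|M'|=0$ and arbitrarily many unmatched leaves, and Fig.~\ref{f:easy}(b) handles $|M'|=1$, $|U'|\ge 2$. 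The actual structural restriction coming from minimality is subtler than a bound on $|U'|$. Third, your repair step for dangerous trees (``follow an edge out of $b$'' to a sibling or larger subtree) is the right instinct, but the termination argument you propose --- a potential that strictly decreases --- needs to be made concrete; in \cite{EFKN-TALG} this is handled by a specific case analysis rather than a generic potential.
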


Let $L(M)$ denote the set of leaves matched by $M$. The algorithm is as follows:

\medskip

\begin{algorithm}[H]
\caption{{\sc Iterative-Contraction}$(T=(V,F),E)$} \label{alg:plain}
{\bf initialize:}                    $M \gets$ maximal matching on $L$ among non twin-edges \newline
\hphantom{\bf Initialize:} $J \gets$   maximal matching on $L \sem L(M)$                        \\ 
contract every link in $J$                                                                                                                              \\            
\While{\em $T$ has at least $2$ nodes} 
{
exhaust greedy contractions                                                                                                                         \\
if $T$ has at least $2$ nodes then for $T',J'$ as in Lemma~\ref{l:B} do:                                                    \newline 
$J \gets J \cup J'$, $T \gets T/T'$     
}
\Return{$J$}
\end{algorithm}

\medskip 

We now describe how to construct $y$ satisfying Properties 1 and 2 as above.
For simplicity of exposition let us use the notation $y_v$ 
to denote the dual variable of the parent $T$-edge of $v$. 
With this notation, Algorithm~\ref{alg:duals} incorporates into Algorithm~\ref{alg:plain} 
the steps of the construction of the dual (possibly infeasible) solution $y$.

\medskip 

\begin{algorithm}[H]
\caption{{\sc Dual-Construction}$(T=(V,F),E)$} \label{alg:duals}
{\bf initialize:}                    $M \gets$ maximal matching on $L$ among non twin-edges           \newline
\hphantom{\bf Initialize:} $J \gets$ maximal matching on $L \sem L(M)$ (see Fig.~\ref{f:init})  \newline 
\hspace*{2.5cm} $\bullet$ $y_v \gets 1$ \ \ \ \ \ \ \ \ \ if $v \in L \sem (M \cup J)$                                           \newline
\hspace*{2.5cm} $\bullet$ $y_v \gets 4/5$ \ \ \ \  \ \    if $v \in L(M)$                                                                \newline 
\hspace*{2.5cm} $\bullet$ $y_v \gets 14/15$ \ \ \         if $v \in L(J)$                                                                  \newline
\hspace*{2.5cm} $\bullet$ $y_v \gets 2/15$ \hspace{0.4cm} if $v$ is a stem of an edge in $J$                         \\
contract every link in $J$                                                                                                                                         \\  
\While{\em $T$ has at least $2$ nodes} {exhaust greedy contractions                                                               \\
if $T$ has at least $2$ nodes then for $T',J'$ as in {\bf Lemma~\ref{l:B}} do:                                                       \newline 
$J \gets J \cup J'$, $T \gets T/T'$                                                                                                                              \newline
\hspace*{0.35cm} {\bf Case 1:} $|C'|=0$ and either: $|M'|=0$ or $|M'|=1,|U'| \geq 2$                                    \newline
\hspace*{1.8cm} $\bullet$ update $y$ as shown in {\bf Fig.~\ref{f:easy}}                                                             \newline
\hspace*{0.35cm} {\bf Case 2:} $|C'|=0$ and $|M'|=|U'|=1$                                                                                \newline 
\hspace*{1.8cm} $\bullet$ update $y$ as shown in {\bf Fig.~\ref{f:hard}}                                                      
}
\Return{$J$}
\end{algorithm}

\medskip 

\begin{figure} \centering
\includegraphics{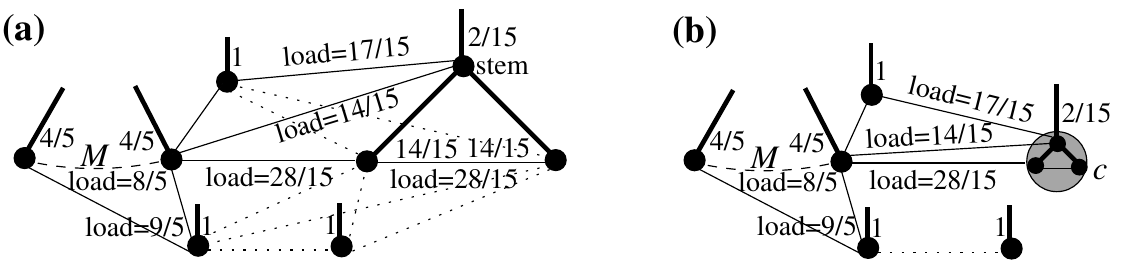}
\caption{ $T$-edges are shown by bold lines, 
edges in $M$ by dashed lines, some other existing edges by thin solid lines, 
and edges that cannot exist by dotted lines. 
(a) Initial duals at step~1 of Algorithm~\ref{alg:duals} and the initial loads. 
Here there is one stem and $|M|=1$.
(b) After contracting the twin-edge at step~2, the new compound node $c$ has credit $1$.}
\label{f:init}
\end{figure}

\begin{figure} \centering
\includegraphics{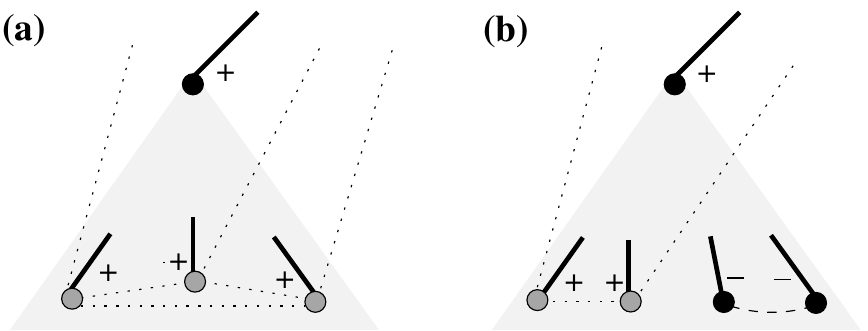}
\caption{Duals updates in Case~1 of Algorithm~\ref{alg:duals}. 
(a) $|M'|=0$; here ``$+$'' means increasing the dual variable by $1/2$.
(b) $|M'|=1$, $|U'| \geq 2$; here ``$+$'' means increasing the dual variable by $2/5$ and ``$-$'' means 
decreasing the dual variable by $2/5$.}
\label{f:easy}
\end{figure}

\begin{figure} 
\centering
\includegraphics{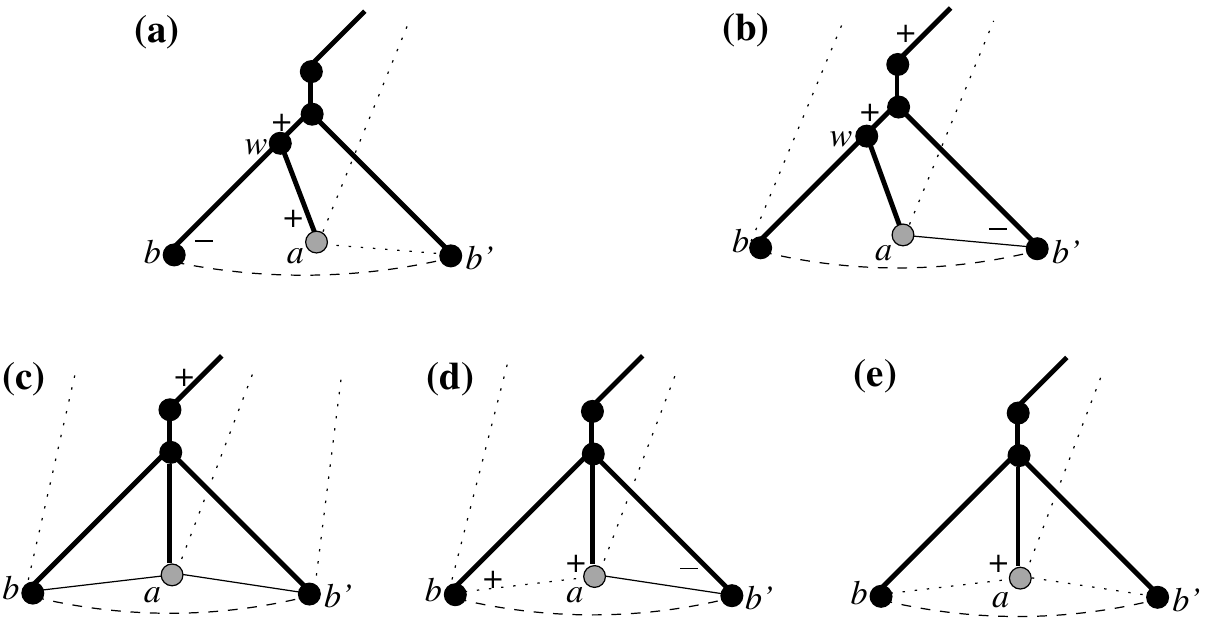}
\caption{Non-dangerous trees with $|M'|=|U'|=1$ and duals updates in Case~2 of Algorithm~\ref{alg:duals}. 
Here ``$+$'' means increasing the dual variable by $2/5$ and ``$-$'' means 
decreasing the dual variable by $2/5$. 
All trees are $a$-closed. The trees in (a,b) are non-dangerous trees of type~(i),
and the trees in (c,d,e) are non-dangerous trees of type~(ii).
In (a) the edge $ab'$ is missing and in (b) $ab'$ is present and $T'$ is $b$-closed. 
In (c) both edges $ab$ and $ab'$ are present, hence to be non-dangerous 
the tree must be both $b'$-closed and $b$-closed. 
In (d) $ab'$ is present hence the tree must be $b$-closed; 
the case when $ab$ present and the tree is $b'$-closed is identical. 
In (e) both $ab$ and $ab'$ are missing.}
\label{f:hard}
\end{figure}

We now define certain quantities that will help us to prove that 
at the end of the algorithm $|J| \leq \sum_{f \in F}y_f$ and that $y$ 
violates the dual constraints by a factor of at most $28/15$.

\begin{definition}[load of an edge]
Given $y \in \mathbb{R}^F_+$ and an edge $e \in E$, the {\bf load $\si(e)$ of $e$} is 
the sum of the dual variables in the constraint of $e$ in the dual LP, namely
$\si(e)=\sum_{\psi(f) \ni e} y_f$.
\end{definition}

\begin{definition}[credit of a node]
Consider a constructed dual solution $y$ and a node $c$ of $T$ during the algorithm,
where $c$ is obtained by contracting the (possibly trivial) subtree $S$ of $T$. 
The {\bf credit $\pi(c)$} is defined as follows.
Let $\pi'(c)$ be the sum of the dual variables $y$ of the edges of $S$ and the parent edge of $c$
minus the number of edges used by the algorithm to contract $S$ into $c$.
Then $\pi(c)=\pi'(c)+1$ if $r \in S$ and $\pi(c)=\pi'(c)$ otherwise. 
\end{definition}

Our goal is to prove that at the end of the algorithm $\si(e) \leq 28/15$ for all $e \in E$,
and that the unique node of $T$ has credit at least $1$.
For an edge $e$ that connects nodes $u,v$ of the current tree $T$
the {\bf level} $\ell(e)$ of $e$ (w.r.t. the current tree $T$) 
is the number of compound nodes and original leaves (of the current tree $T$) in $\{u,v\}$. 
Clearly, $\ell(e) \in \{0,1,2\}$ and note that if both endnodes of $e$ 
lie in the same compound node then $e$ is a loop and $\ell(e)=2$.

\begin{lemma} \label{l:levels}
At the end of step~2 of Algorithm~\ref{alg:duals}, 
and then at the end of every iteration in the ``while'' loop, the following holds.
\begin{itemize}
\item[{\em (i)}]
$\pi(c) \geq 1$ if $c$ is an unmatched leaf or a compound node of $T$.
\item[{\em (ii)}]
For any edge $e$:
\begin{itemize}
\item[$\bullet$]  
$\si(e) \leq 28/15$ \ if $\ell(e)=2$.                                                                       
\item[$\bullet$] 
$\si(e) \leq 16/15$ \ if $\ell(e)=1$.                                              
\item[$\bullet$] 
$\si(e)=0$   \hspace*{0.7cm} if $\ell(e)=0$. 
\end{itemize}
 \end{itemize}
\end{lemma}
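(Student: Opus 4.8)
I would prove Lemma~\ref{l:levels} by induction on the number of operations performed by Algorithm~\ref{alg:duals} so far, i.e.\ on the number of greedy contractions and semi\nobreakdash-closed tree contractions that have been carried out, maintaining the conjunction of (i) the credit bound $\pi(c)\ge 1$ for every unmatched leaf and every compound node and (ii) the level\nobreakdash-dependent load bounds $\sigma(e)\le \frac{28}{15},\frac{16}{15},0$ according as $\ell(e)=2,1,0$. Two elementary facts are used throughout. First, for a fixed edge $e\in E$ the level $\ell(e)$ is monotone nondecreasing along the run: a contraction can only turn an endpoint of $e$ that was an internal non\nobreakdash-leaf node into (part of) a compound node, and it can never destroy a compound node or an original leaf that is an endpoint of $e$; since the three bounds in (ii) are nondecreasing in $\ell$, an edge whose level rises automatically keeps satisfying its bound as long as it satisfied the smaller one before. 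Second, Lemma~\ref{l:B} is invoked only after greedy contractions are exhausted, so at the moment of any semi\nobreakdash-closed tree contraction the current tree has \emph{no} edge joining two nodes of $U$; this is what rules out the otherwise\nobreakdash-problematic configuration of two unmatched leaves joined by an edge of load $\ge 2$.

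\textbf{Base case (end of step~2).} After the initialization and the contraction of the twin\nobreakdash-edges of $J$, the nonzero dual variables all lie on parent $T$\nobreakdash-edges of original leaves and of stems. Then $\pi(v)=y_v=1$ for an unmatched original leaf $v$, $\pi(r)=0+1=1$ for the root, and a compound node $c$ obtained from a twin\nobreakdash-edge $bb'$ with stem $a$ has $\pi(c)=y_b+y_{b'}+y_a-1=\frac{14}{15}+\frac{14}{15}+\frac{2}{15}-1=1$ (and $\pi(c)=\frac{13}{15}+1\ge 1$ if $r$ is absorbed into $c$), giving~(i). For~(ii) I would write, for $e=uv$, $\sigma(e)$ as the sum of $y_w$ over the leaves and stems $w\ne \mathrm{lca}(u,v)$ on the original $uv$\nobreakdash-path — a contributing leaf being forced to be $u$ or $v$ itself — and bound it by a direct check of the endpoint\nobreakdash-type combinations: a level\nobreakdash-$0$ edge has both endnodes internal non\nobreakdash-stem nodes, hence load $0$; the level\nobreakdash-$2$ bound is met with equality in the tight case of an edge both of whose endpoints are (compound nodes built from) leaves of twin\nobreakdash-edges in $J$, contributing $\frac{14}{15}+\frac{14}{15}=\frac{28}{15}$; the level\nobreakdash-$1$ bound $\frac{16}{15}$ follows likewise, using that no edge joins two unmatched leaves.

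\textbf{Inductive step: greedy contraction.} Here $y$ is not modified, so $\sigma(e)$ is unchanged for every $e$, and by level monotonicity (ii) is preserved. For (i), the new compound node $c$ is the contraction of the path joining the two unmatched leaves $u,v$ using one new edge, so the standard credit accounting gives $\pi(c)\ge \pi(u)+\pi(v)-1\ge 1$ (with the $+1$ bookkeeping when $r$ is involved); and if the contraction makes $c$ a leaf, $c$ is an unmatched leaf and $\pi(c)\ge 1$ is exactly what (i) asks.

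\textbf{Inductive step: semi\nobreakdash-closed tree contraction.} Let $T'$, together with a cover $J'$ of size $|M'|+|U'|$, be as in Lemma~\ref{l:B}, let $c$ be the contraction of $T'$, and apply the dual changes of Figure~\ref{f:easy} in Case~1 ($|C'|=0$ and $|M'|=0$, or $|M'|=1,|U'|\ge 2$) and of Figure~\ref{f:hard} in Case~2 ($|C'|=0$, $|M'|=|U'|=1$, $T'$ non\nobreakdash-dangerous). For (i) I would sum the credits of the nodes of $T'$ (at least $1$ for each unmatched leaf and each compound node, and $\frac{4}{5}$ for each matched leaf, by the induction hypothesis and the initialization), the dual values of the internal $T$\nobreakdash-edges of $T'$, and the net increase of the dual produced by the update, and subtract $|J'|=|M'|+|U'|$; one then checks $\pi(c)\ge 1$ in each of the finitely many shapes of the two figures (this is where $U'$\nobreakdash-closedness makes the count balance, since no credit escapes through an unmatched leaf). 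For (ii), the only edges whose load changes are those containing a $T$\nobreakdash-edge whose dual was raised, and for each such edge one checks it still meets the bound for its (possibly risen) level; the delicate sub\nobreakdash-case is a level\nobreakdash-$2$ edge that either leaves $T'$ or has become a loop inside $c$. For a crossing edge, $U'$\nobreakdash-closedness forces it to leave through a matched leaf, and in Case~2 the labelling of the two matched leaves as $b$ and $b'$ in Figure~\ref{f:hard} is chosen precisely so that the raised parent $T$\nobreakdash-edge is on the \emph{closed} side; for loops inside $c$, the non\nobreakdash-dangerous hypothesis excludes exactly the configuration in which the update would push the load above $\frac{28}{15}$.

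\textbf{Main obstacle.} The substance of the proof is this last case analysis: checking, for the handful of tree shapes in Figures~\ref{f:easy} and~\ref{f:hard}, that the prescribed dual update simultaneously (a) creates enough credit at $c$ to pay for the $|M'|+|U'|$ new edges and (b) leaves every level\nobreakdash-$2$ load at most $\frac{28}{15}$ and every level\nobreakdash-$1$ load at most $\frac{16}{15}$. The class of ``dangerous'' trees is defined so as to be exactly the semi\nobreakdash-closed trees on which (a) and (b) cannot be met together, which is why one must invoke the guarantee of Lemma~\ref{l:B} that the returned $T'$ is non\nobreakdash-dangerous; confirming that this dichotomy is correct — i.e.\ that \emph{every} non\nobreakdash-dangerous semi\nobreakdash-closed tree admits a compatible update — is the crux and the only place where the argument is not routine.
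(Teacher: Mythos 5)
Your proposal follows the same overall strategy as the paper: induction over the algorithm's iterations, using level monotonicity together with the fact that the three bounds in (ii) are nondecreasing in $\ell$; credit bookkeeping at the new compound node $c$ of the form $\pi'(c)\ge\pi(C')+\frac{8}{5}|M'|+|U'|-(|M'|+|U'|)+\Delta(y)$; and the observation that Lemma~\ref{l:B} is invoked only after greedy contractions are exhausted. Your base case and greedy-contraction step are correct and match the paper's computations.

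The gap is in the treatment of (ii) under a semi-closed tree contraction, which, as you yourself say, is the crux of the lemma, and where your sketch both stops short and partially misdescribes what must be checked. The paper's mechanism is to define, for each edge $e$ meeting $T'$, the signed count $q(e)$ of ``$+$'' signs minus ``$-$'' signs along $T_e$ in Figures~\ref{f:easy} and~\ref{f:hard}, and to verify from the figures that: (a) $q(e)\le 0$ whenever $e$ joins a leaf of $T'$ to another leaf of $T'$ or to a node outside $T'$ --- so the load never increases on such edges, independently of non-dangerousness; and (b) if $e$ is incident to a non-leaf node of $T'$ then $\ell'(e)>\ell(e)$ (because $|C'|=0$) and $q(e)\le 1$, with one exceptional tight case $q(e)=2$ in Fig.~\ref{f:hard}(a) when $e$ joins $a$ to an ancestor of $w$; in every case $\sigma'(e)-\sigma(e)\le\frac{4}{5}=\frac{28-16}{15}$, which is exactly the smallest gap between consecutive level bounds. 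Your sketch says, instead, that ``the only edges whose load changes are those containing a $T$-edge whose dual was raised'' (it must cancel $+$'s against $-$'s, as in $q(e)$), that the ``non-dangerous hypothesis excludes exactly the configuration in which the update would push the load above $\frac{28}{15}$'' for loops inside $c$ (not so: for leaf--leaf edges the load simply never increases), and that in Case 2 the choice of $b$ versus $b'$ keeps the raised dual on the closed side (plausible but not the organizing principle). You also do not identify the one genuinely tight configuration $q(e)=2$ and why it is still within budget. Since these case checks are precisely the content of the lemma, the proposal as written is not a proof but a correct outline of one, with a few inaccuracies in the place where the work happens.
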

\begin{proof}
It is easy to see that the statement holds at the end of step~2,~see~Fig.~\ref{f:init}.
We will prove by induction that the statement continues to hold after each contraction step of the while-loop. 
Let us consider such contraction step that resulted in a new compound node $c$ and denote by 
$\si', \ell', \pi'$ the new values of $\si, \ell, \pi$ after the contraction. 
By the induction hypothesis 
$\si,\ell,\pi$ satisfy properties (i) and (ii) above, and we prove that 
$\si',\ell',\pi'$ satisfy (i) and (ii) as well.

For (i) it is sufficient to prove that $\pi'(c) \geq 1$, as $\pi'=\pi$ for other nodes.
Consider a greedy contraction with an edge $e$ connecting two unmatched leaves $u$ and $v$.
By the induction hypothesis, $\pi(u),\pi(v) \geq 1$.
Thus $\pi'(c) \geq \pi(u)+\pi(v)-1 \geq 1$.
Now suppose that a semi-closed tree $T'$ was contracted into $c$.
Let $\Delta(y)$ denote the increase in the value of $y$ during the contraction step and note that 
$$
\pi'(c) \geq \left(\pi(C')+\frac{8}{5}|M'|+|U'|\right)-(|M'|+|U'|) +\Delta(y) \geq |C'|+\frac{3}{5}|M'| +\Delta(y) \ .
$$
If $|C'| \geq 1$ or $|M'| \geq 2$ then $\pi'(c) \geq |C'|+\frac{3}{5}|M'| \geq 1$. 
If $|M'|=0$ (Fig.~\ref{f:easy}(a)) then $\pi'(c) \geq \Delta(y) \geq \frac{1}{2}(|U'|+1) \geq 1$.
If $|M'|=1$ then $\Delta(y) = \frac{2}{5}$, since in all cases in Figures \ref{f:easy}(b) and \ref{f:hard},
the number of ``$+$'' signs is larger by one than the number of ``$-$'' signs;
thus $\pi'(c) \geq \frac{3}{5}|M'| +\frac{2}{5} \geq 1$.
In all cases $\pi'(c) \geq 1$, as required.

We now show that property (ii) holds. 
Note that if $\si'(e)=\si(e)$ then (ii) continues to hold for $e$, since contractions can only increase the edge level
and since the bounds in (ii) are increasing with the level. 
Thus we only need to consider the cases when we change the dual variables, namely, when
a semi-closed tree $T'$ was contracted into $c$; these are the cases given in Figures \ref{f:easy} and \ref{f:hard}. 

It is sufficient to consider edges with at least one endnode in $T'$, as $\si'=\si$ and $\ell'=\ell$ holds for other edges.
Let $e$ be an edge that has an endnode in $T'$.
Let $q(e)$ denote the number of ``$+$'' signs minus the number of ``$-$'' signs 
in Figures \ref{f:easy} and \ref{f:hard} along the path $T_e$;
we have $\si'(e)-\si(e)=\frac{1}{2}q(e)$ in Fig.~\ref{f:easy}(a) and $\si'(e)-\si(e)=\frac{2}{5}q(e)$ in all other cases.
One can verify that $q(e) \leq 0$ if $e$ connects a leaf of $T'$ to another leaf of $T'$ or to a node outside $T'$.
Thus it remains to consider the case when $e$ is incident to a non-leaf node of $T'$.
Then $\ell'(e)>\ell(e)$, since $|C'|=0$. 
One can verify that $q(e) \leq 1$, except one case -- $q(e)=2$ if in Fig.~\ref{f:hard}(a) 
$e$ connects the leaf $a$ to a node $v$ in $T'$ that is an ancestor of $w$;
this tight case is the one that determined our initial assignment of dual variables.
In all cases we have $\si'(e)-\si(e) \leq \frac{4}{5}$, which equals 
the minimum difference $\frac{28-16}{15}$ in the bounds in (ii) due to an increase of an edge level.
This concludes the proof of (ii) and of the lemma.
\qed
\end{proof}

\section{Integrality gap of the 3-{\BLP}} \label{s:gap'}

The following simple LP-relaxation was suggested by the author several years before \cite{A} and \cite{FGKS}.
Let us call an odd size set $B$ of edges of $T$ a {\bf bunch} if no $3$ edges in $B$ lie
on the same path in $T$. 
Let $\BB$ denote the set of bunches in $T$. 
For every $B \in \BB$ at least $w_B:=(|B|+1)/2$ edges are needed to cover $B$.
The corresponding {\sc Bunch-LP} and its dual LP are:
\[
\begin{array}{lllllllll} \ \ \ \ \ \ \ 
& \min             & \di \   \sum_{e \in E} x_e                      &  \ \ \ \ \ \ \ \ \ \ \ \ \ \ \ \ \ \ \ \                   
& \max            & \di \   \sum_{B \in \BB} w_B y_B                                &                         \\
& \mbox{s.t.}  & \di     \sum_{e \in \psi(B)} x_e  \geq w_B  & \forall B \in \BB 
& \mbox{s.t.}  & \di     \sum_{\psi(B) \ni e}y_B \leq 1      & \forall e \in E   \\
&                      & \di \ \ x_e \geq 0                                   & \forall e \in E 
&                      & \di \ \ y_B \geq 0                                    & \forall B \in \BB  
\end{array}
\]

A {\bf $k$-bunch} is a bunch of size $k$. 
Let $k$-{\BLP} be the restriction of the {\BLP} to bunches of size $\leq k$.
Note that $1$-{\BLP} is just the {\CLP}, and that Theorem~\ref{t:main} says that the 
integrality gap of the $1$-{\BLP} is at most $28/15$. 
We can easily prove a better bound for the $3$-{\BLP}. 
 
\begin{theorem} \label{t:gap'}
For unit costs, the integrality gap of the $3$-{\BLP} is at most $7/4$.
\end{theorem}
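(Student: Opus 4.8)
We outline how such a bound can be obtained.

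The plan is to reuse the algorithm of Theorem~\ref{t:gap} — Algorithm~\ref{alg:plain} together with its dual-fitting bookkeeping — but to fit a dual solution of the $3$-{\BLP} rather than of the {\CLP}. Since the $3$-{\BLP} contains all the $1$-bunch constraints $x(\psi(f))\ge 1$, it dominates the {\CLP}; the extra freedom is that dual weight may now be placed on $3$-bunches $B$, each of which has coefficient $w_B=2$ in the objective. Concretely I would build $y\in\mathbb{R}^{\BB}_+$, supported on bunches of size $1$ and $3$, with the two properties used before: (1) $y$ pays for the computed solution $J$, i.e. $|J|\le\sum_{B\in\BB}w_By_B$; and (2) $y$ violates the dual constraints by a factor of at most $7/4$, i.e. the load $\si(e)=\sum_{B:\,e\in\psi(B)}y_B$ satisfies $\si(e)\le 7/4$ for every $e\in E$. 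From (2), $\tfrac{4}{7}y$ is a feasible dual solution of the $3$-{\BLP}, so by weak duality $\sum_Bw_By_B$ is at most $\tfrac{7}{4}$ times the optimal value of the $3$-{\BLP}; combined with (1) this gives the claimed integrality gap.

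The reason $3$-bunches help is the mechanism of the load: if $e\in\psi(B)$ then raising $y_B$ by $\alpha$ raises $\si(e)$ by exactly $\alpha$, no matter how many edges of $B$ are covered by $e$; and three leaf $T$-edges always form a legal bunch of weight $2$, because no path in $T$ contains three leaf $T$-edges. Thus in the ``hard'' cases of Algorithm~\ref{alg:duals}, where a non-dangerous semi-closed subtree $T'$ with three leaves — the matched pair $b,b'$ and the unmatched leaf $a$ — is covered by $|M'|+|U'|=2$ edges, the two $T$-edges that received the two ``$+$'' increments in the $28/15$-analysis of Lemma~\ref{l:levels} (the increments that drove a level-$2$ load up to $28/15$ in the tight case of Fig.~\ref{f:hard}(a)) can be merged, together with a third ``faraway'' leaf $T$-edge, into a single $3$-bunch $B$; setting $y_B$ to half the total objective value those increments used to contribute then delivers the same objective while charging any edge covering \emph{both} of them only one increment rather than two. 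The remaining items — the initialization (Fig.~\ref{f:init}), the greedy contractions, and Case~1 (Fig.~\ref{f:easy}) — I would re-do in the same spirit, grouping the $T$-edges that must carry dual weight into legal bunches (completing to size $1$ or $3$ with leaf $T$-edges so legality is automatic), and re-optimizing the initial dual values (the analogues of $1,\tfrac{4}{5},\tfrac{14}{15},\tfrac{2}{15}$) against the resulting load increments, so as to restore the invariants of Lemma~\ref{l:levels} with the bounds $\si(e)\le 7/4$, $\si(e)\le 1$, $\si(e)=0$ for edge levels $2,1,0$, and credit $\pi(\cdot)\ge 1$ for every compound node and unmatched leaf.

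The step I expect to be the main obstacle is re-deriving the level bookkeeping of Lemma~\ref{l:levels} with these bunch-based increments and checking that no edge ever accumulates load above $7/4$ — in particular tracking an edge that rises from level~$1$ to level~$2$ and then gains further load, and rechecking the side effect of each new $3$-bunch on edges that cover only its third (``faraway'') $T$-edge, which is what forces the re-tuning of the initial dual values. A secondary point is to verify, for Case~1 and the greedy contractions, that the required bunches can always be formed legally with no current edge covering an entire $3$-bunch (automatic for a bunch made of three leaf $T$-edges, but $T'$ in Case~1 may not supply three leaves to one bunch, so some $1$-bunch duals together with a credit/potential argument as in Lemma~\ref{l:levels} are still needed). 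If a clean uniform choice of bunches turns out to be awkward, a coarser fallback is to fix the initial duals and bound, case by case, the single additional load increment an edge can receive once it has reached level~$2$, checking that the total stays below $7/4$.
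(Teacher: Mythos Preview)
Your high-level plan matches the paper's: run the same algorithm, fit a dual of the $3$-{\BLP}, and re-prove the level invariants of Lemma~\ref{l:levels} with bounds $7/4,\,1,\,0$ and credit $\pi\ge 1$. But what you have is a plan with acknowledged gaps, not a proof: you neither specify the new initial dual values nor the update rules, and you flag the level bookkeeping as ``the main obstacle'' without resolving it.

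Moreover, the one concrete mechanism you commit to --- completing a $3$-bunch with a ``faraway'' leaf $T$-edge --- points in the wrong direction. If that third $T$-edge lies outside $T'$, then any edge covering it lies in $\psi(B)$ and gains load even though its level is unchanged, so the inductive invariant fails for that edge; the locality ``only edges with an endnode in $T'$ are affected'' is what makes the induction work, and this scheme destroys it. In the paper the use of $3$-bunches is far narrower than you anticipate. The initial duals are $1$, $3/4$, $1/2$ on $L\setminus L(M\cup J)$, $L(M)$, $L(J)$, together with $y_B\gets\tfrac12$ for the $3$-bunch $B$ of the three $T$-edges at each stem of an edge in $J$. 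During the while-loop, every case in Figures~\ref{f:easy} and~\ref{f:hard} except~\ref{f:hard}(a) uses only $1$-bunches, with ``$+$''/``$-$'' now meaning $\pm\tfrac12$; so your concern about assembling bunches in Case~1 does not arise. The single place a $3$-bunch is needed is Fig.~\ref{f:hard}(a), and the correct bunch is entirely inside $T'$: the parent $T$-edges of $a$, of $b$, and of the internal node $w$. Setting $y_B\gets\tfrac12$ and $y_b\gets y_b-\tfrac12$ gives objective gain $w_B\cdot\tfrac12-\tfrac12=\tfrac12$, matching the other $|M'|=1$ cases, while every edge with an endnode in $T'$ now gains load at most $\tfrac12$ in this step (replacing the old worst case $2\cdot\tfrac25$); since the gap between consecutive level bounds is $3/4>\tfrac12$, the induction goes through.
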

\begin{proof}
We use the same algorithm as before, but define the dual variables differently.
In the initialization step we set (see Fig.~\ref{f:init'}):
\begin{itemize}
\item[$\bullet$] 
$y_v \gets 1$ \ \ \ \ \ if $v \in L \sem L(M \cup J)$
\item[$\bullet$]  
$y_v \gets 3/4$ \ \ if $v \in L(M)$
\item[$\bullet$]  
$y_v \gets 1/2$ \ \ if $v \in L(J)$                                                                      
\item[$\bullet$]  $y_B \gets 1/2$  \ if $B$ is the $3$-bunch of 
a stem of an edge in $J$ 
\end{itemize}
In the updates of the dual variables in Figures \ref{f:easy} and \ref{f:hard},
``$+$'' and ``$-$'' means increasing and decreasing the dual variable by $1/2$, respectively, with one exception:
in Fig.~\ref{f:hard}(a) the updates are $y_b \gets y_b-1/2$ and
$y_B \gets 1/2$, where $B$ is the $3$-bunch formed by the parent $T$-edges of $a, b,w$.
Similarly to Lemma~\ref{l:levels} we prove that after step~2 the following holds:
\begin{itemize}
\item[(i)]
$\pi(c) \geq 1$ if $c$ is an unmatched leaf or a compound node of $T$.
\item[(ii)]
For any edge $e$:
$\si(e) \leq 7/4$ if $\ell(e)=2$,
$\si(e) \leq 1$ if $\ell(e)=1$, and 
$\si(e)=0$           if $\ell(e)=0$.
\end{itemize}

\begin{figure} \centering
\includegraphics{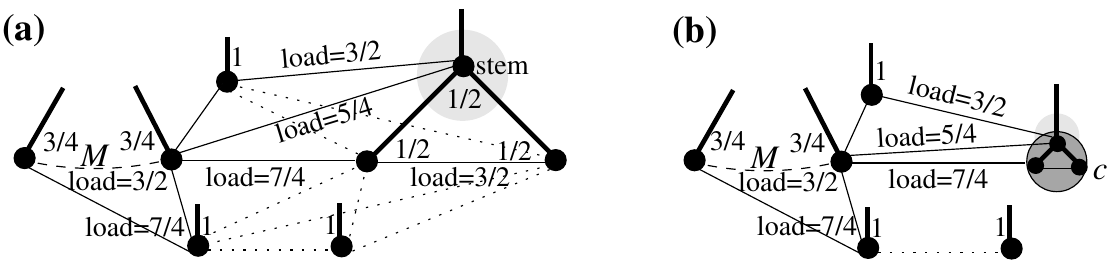}
\caption{Initial duals of the $3$-{\BLP} and the initial loads. 
The $3$-bunch of the edges incident to the stem is shown by a light gray circle.}
\label{f:init'}
\end{figure}

It is easy to see that the statement holds at the end of step~2,~see~Fig.~\ref{f:init'};
note that after step~2 the edge with load $5/4$ has level $2$.
As in Lemma~\ref{l:levels} we continue by induction while using the same notation, 
but focus only on the arguments that are different from the ones in Lemma~\ref{l:levels}.

Suppose that a semi-closed tree $T'$ was contracted into a compound node $c$. Then
$$
\pi'(c) \geq \left(\pi(C')+\frac{3}{2}|M'|+|U'|\right)-(|M'|+|U'|) +\Delta(y) \geq |C'|+\frac{1}{2}|M'| +\Delta(y) \ .
$$
If $|C'| \geq 1$ or $|M'| \geq 2$ then $\pi'(c) \geq |C'|+\frac{1}{2}|M'| \geq 1$. 
If $|M'|=0$ (Fig.~\ref{f:easy}(a)) then $\pi'(c) \geq \Delta(y) \geq \frac{1}{2}(|U'|+1) \geq 1$.
If $|M'|=1$ then $\Delta(y) = \frac{1}{2}$ and thus $\pi'(c) \geq \frac{1}{2}|M'| +\Delta(y) \geq 1$;
this is since in each one of the cases in Figures \ref{f:easy}(b) and \ref{f:hard}(b,c,d,e)
the number of ``$+$'' signs is larger by one than the number of ``$-$'' signs,
while in the case in Fig.~\ref{f:hard}(a) we gain $2 \cdot \frac{1}{2}=1$ when increasing by $\frac{1}{2}$ 
the dual variable of a $3$-bunch, and loose just $\frac{1}{2}$ by decreasing $y_b$ by $\frac{1}{2}$.
In all cases we have $\pi'(c) \geq 1$, as required.

We now show that property (ii) holds. 
Consider a semi-closed tree $T'$ was contracted into $c$ and an edge $e$ with at least one endnode in $T'$.
Note that now $\frac{3}{4}$ is the minimum difference  in the bounds in (ii) due to an increase of an edge level.

Let us consider the case in Fig.~\ref{f:hard}(a).
If $e$ is incident to $b$ or if $e=vb'$ for some $v \in T'$ then $\si'(e) \leq \si(e)$.
In all the other cases we have $\ell'(e) > \ell(e)$ and $\si'(e) - \si(e) \leq \frac{1}{2}<\frac{3}{4}$.
Hence the induction step holds in this case.

For the other cases, as before, let $q(e)$ denote the number of ``$+$'' signs minus the number of ``$-$'' signs 
in Figures \ref{f:easy} and \ref{f:hard}(b,c,d,e) along the path $T_e$;
we have $\si'(e)-\si(e)=\frac{1}{2}q(e)$ in all cases.
One can verify that $q(e) \leq 0$ if $e$ connects a leaf of $T'$ to another leaf of $T'$ or to a node outside $T'$.
If $e$ is incident to a non-leaf node of $T'$ then $\ell'(e)>\ell(e)$ and $q(e) \leq 1$,
which implies $\si'(e)-\si(e) \leq \frac{1}{2}$.
This concludes the proof of (ii) and of the lemma.
\qed
\end{proof}

\section{Conclusions}
In this paper we presented an improved algorithm for {\TA}, based on the idea of Adjiashvili \cite{A}.
A minor improvement is that the algorithm is simpler, as it avoids 
a technical discussion on so called ``early compound nodes'', see \cite{A} and \cite{FGKS}.
A more important improvement is in the running time -- $4^k poly(n)$ instead of $n^{k^{O(1)}}$, 
where $k=\Theta(M/\eps^2)$.
This allows ratio better than $2$ also for logarithmic costs, and not only costs bounded by a constant.
These two improvements are based, among others, on a more compact and simpler LP for the problem. 
Another important improvement is in the ratio -- $\frac{12}{7}+\eps$ instead of $1.96418+\eps$ in \cite{A}.
This algorithm is based on a combinatorial result for spider-shaped {\TA} instances.
We showed that for spider-shaped instances, the extreme points of the {\sc Cut-Polyhedron} are half-integral, 
and thus {\TA} on such instances can be approximated within $4/3$.
As was mentioned, a related recent result of \cite{FGKS} shows that for spider-shaped instances,
augmenting the {\CLP} by $\{0,\frac{1}{2}\}$-Chv\'{a}tal-Gomory Cuts gives 
an integral polyhedron and that such instances can be solved optimally in polynomial time.
Overall we get that spider-shaped instances behave as ``star-instances'' -- when $T$ is a star 
(this is essentially the {\sc Edge-Cover} problem): the extreme points of the {\CLP} are half-integral, 
while augmenting it by $\{0,\frac{1}{2}\}$-Chv\'{a}tal-Gomory Cuts gives an integral polyhedron.
The description of the $\{0,\frac{1}{2}\}$-Chv\'{a}tal-Gomory Cuts in \cite{FGKS} is somewhat complicated,
and a natural question is whether using the simpler {\sc Bunch-LP} gives the same result.
This is so when $T$ is a star, c.f. \cite{Sch} where an equivalent {\sc Edge-Cover} problem is considered.

Our second main result is that in the case of unit costs the integrality gap of the {\CLP} is less than $2$,
which resolves a long standing open problem.
Our goal here was just to present the simplest verifiable proof for this fact, and 
we believe that our bound $2-2/15$ can be improved by a slightly more complex algorithm and analysis.
As was mentioned, several LP and SDP relaxations, more complex than the {\CLP}, were shown to have
integrality gap less than $2$ for particular cases (e.g., the {\kBLP} with logarithmic costs).
The hope was that this may lead to ratio better than $2$ for the general case. 
Our result suggests that already the simplest {\CLP}, combined with the dual fitting method,
may be the right one to study to achieve this goal.
More complex LP's (e.g., the {\BLP} or the {\sc Odd-Cut LP}) may be used to improve the ratio.
 
\medskip \medskip

\begin{acknowledgements}
I thank Shoni Gilboa, Manor Mendel, Moran Feldman, and Gil Alon for several discussions.
I also thank Chaitanya Swamy for a discussion on the {\BLP} and the {\sc Odd-Cut LP} during FND 2014,
and L\'{a}szl\'{o} V\'{e}gh on further discussions on the {\BLP} during 2015.
\end{acknowledgements}


\end{document}